 \font \eightrm=cmr8
 \newcommand{\nc}{\newcommand}
\newtheorem{thm}{Theorem}
\newtheorem{cor}[thm]{Corollary}
\newtheorem{lem}[thm]{Lemma}
\newtheorem{prop}[thm]{Proposition}
\newtheorem{defn}{Definition}
\DeclareMathOperator{\Lin}{Lin}    
\nc{\ignore}[1]{{}}
\nc{\mrm}[1]{{\rm #1}}
\nc{\dirlim}{\displaystyle{\lim_{\longrightarrow}}\,}
\nc{\invlim}{\displaystyle{\lim_{\longleftarrow}}\,}
\nc{\vep}{\varepsilon} \nc{\ep}{\epsilon}
\nc{\sigmat}{\widetilde\sigma}
\nc{\ostar}{\overline{*}}
\nc{\mchar}{\mrm{Char}}
\nc{\Hom}{\mrm{Hom}}
\nc{\id}{\mrm{id}}
\nc{\remark}{\noindent{\bf{Remark:}}}
\nc{\remarks}{\noindent{\bf{Remarks:}}}
 \nc{\delete}[1]{}
 \nc{\grad}[1]{^{({#1})}}
 \nc{\fil}[1]{_{#1}}
\nc{\BA}{{\Bbb A}} \nc{\CC}{{\Bbb C}} \nc{\DD}{{\Bbb D}}
\nc{\EE}{{\Bbb E}} \nc{\FF}{{\Bbb F}} \nc{\GG}{{\Bbb G}}
\nc{\HH}{{\Bbb H}} \nc{\LL}{{\Bbb L}} \nc{\NN}{{\Bbb N}}
\nc{\PP}{{\Bbb P}} \nc{\QQ}{{\Bbb Q}} \nc{\RR}{{\Bbb R}}
\nc{\TT}{{\Bbb T}} \nc{\VV}{{\Bbb V}} \nc{\ZZ}{{\Bbb Z}}
\nc{\Cal}[1]{{\mathcal {#1}}}
\nc{\mop}[1]{\mathop{\hbox {\rm #1} }}
\nc{\smop}[1]{\mathop{\hbox {\eightrm #1} }}
\nc{\mopl}[1]{\mathop{\hbox {\rm #1} }\limits}
\nc{\frakg}{{\frak g}}
\nc{\g}[1]{{\frak {#1}}}
\def \restr#1{\mathstrut_{\textstyle |}\raise-8pt\hbox{$\scriptstyle #1$}}
\def \srestr#1{\mathstrut_{\scriptstyle |}\hbox to
  -1.5pt{}\raise-4pt\hbox{$\scriptscriptstyle #1$}}
\nc{\wt}{\widetilde}
\nc{\wh}{\widehat}
\nc{\un}{\hbox{\bf 1}}
\nc{\redtext}[1]{\textcolor{red}{\tt #1}}
\nc{\bluetext}[1]{\textcolor{blue}{#1}}
\nc{\comment}[1]{[[{\tt {#1}}]] }
\nc{\R}{{\mathbb R}}
\nc\fleche[1]{\mathop{\hbox to #1 mm{\rightarrowfill}}\limits}
\def\semi{\mathrel{\times}\kern -.85pt\joinrel\mathrel{\raise 1.4pt\hbox{${\scriptscriptstyle |}$}}}
\author{Kurusch Ebrahimi-Fard}
\address{Departamento de F\'{i}sica Te\'orica, Universidad de Zaragoza, 50009 Zaragoza, Spain. 
	       {On leave from Univ.~de Haute Alsace, Mulhouse, France}}
    \email{kef@unizar.es, kurusch.ebrahimi-fard@uha.fr}         
 \urladdr{http://www.th.physik.uni-bonn.de/th/People/fard/}
\author{Fr\'ed\'eric Patras}
\address{Universit\'e de Nice,
		Laboratoire J.-A.~Dieudonn\'e
	       	UMR 6621, CNRS,
               	Parc Valrose,
               	06108 Nice Cedex 02, France}
    \email{patras@unice.fr}
 \urladdr{www-math.unice.fr/~patras}
\begin{document}

\title{Exponential renormalization}
	
\date{June 12th, 2010}

\begin{abstract}
Moving beyond the classical additive and multiplicative approaches, we present an ``exponential'' method for perturbative renormalization. Using Dyson's identity for Green's functions as well as the link between the Fa\`a di Bruno Hopf algebra and  the Hopf algebras of Feynman graphs, its relation to the composition of formal power series is analyzed. Eventually, we argue that the new method has several attractive features and encompasses the BPHZ method. The latter can be seen as a special case of the new procedure for renormalization scheme maps with the Rota--Baxter property. To our best knowledge, although very natural from group-theoretical and physical points of view, several ideas introduced in the present paper seem to be new (besides the exponential method, let us mention the notions of counterfactors and of order $n$ bare coupling constants).
\end{abstract}

\maketitle
\tableofcontents

\section{Introduction}

Renormalization theory \cite{CasKen,Collins,Del,IZ1980} plays a major role in the perturbative approach to quantum field theory (QFT). Since its inception in the late 1930s \cite{Brown} it has evolved from a highly technical and difficult set of tools, mainly used in precision calculations in high energy particle physics, into a fundamental physical principle encoded by the modern notion of the renormalization group.    

Recently, Alain Connes, Dirk Kreimer, Matilde Marcolli and collaborators developed a compelling mathematical setting capturing essential parts of the algebraic and combinatorial structure underlying the so-called BPHZ renormalization procedure in perturbative QFT \cite{CKI,CKII,CKIII,CM2008,kreimer2}. The essential notion appearing in this approach is the one of combinatorial Hopf algebras. The latter typically consists of a graded vector space where the homogeneous components are spanned by finite sets of combinatorial objects, such as planar or non-planar rooted trees, or Feynman graphs, and the Hopf algebraic structures are given by particular constructions on those objects. For a particular QFT the set of Feynman rules corresponds to a multiplicative map from such a combinatorial Hopf algebra, generated, say, by one-particle irreducible (1PI) ultraviolet (UV) superficially divergent diagrams, into a commutative unital target algebra. This target algebra essentially reflects the regularization scheme.

The process of renormalization in perturbative QFT can be performed in many different ways~\cite{Collins,IZ1980}. A convenient framework is provided by dimensional regularization (DR). It implies a target algebra of regularized probability amplitudes equipped with a natural Rota--Baxter (RB) algebra structure. The latter encodes nothing but minimal subtraction (MS). Introducing a combinatorial Hopf algebra of Feynman graphs in the context of $\phi^3$-theory (in 6 dimensions) allows for example to reformulate the BPHZ renormalization method for Feynman graphs, in terms of a Birkhoff--Wiener--Hopf (BWH) decomposition inside the group of dimensionally regularized characters \cite{kreimer2,CKII}. As it turns out, Bogoliubov's recursive renormalization process is then best encoded by Atkinson's recursion for noncommutative Rota--Baxter algebras, the solution of which was obtained in the form of a closed formula in \cite{EFMP}.  Following Kreimer \cite{kreimer1}, Walter van Suijlekom extended the Hopf algebra approach to perturbative renormalization of gauge theories \cite{vanSu1,vanSu2}. 

The Connes--Kreimer approach focussed originally on DR+MS but can actually be extended to other regularization schemes, provided the subtraction method corresponds to a Rota--Baxter algebra structure. It applies for example to zero momentum subtraction as shown in \cite{EFGP}. However, essential parts of this algebraic machinery are not anymore available once the RB property is lost. More precisely, the remarquable result that Bogoliubov's classical renormalization formulae give birth to Hopf algebra characters and are essentially equivalent to the BWH decomposition of Hopf algebra characters is lost if the renormalization scheme map is not RB \cite{CKI,CKII}. 

Two remarks are in order. First, more insights from an algebraic point of view are needed in this particular direction. As a contribution to the subject, we propose and study in the last section of the present paper a non-MS scheme within DR which is not of Rota--Baxter type. Second, the characterization of the BPHZ method in terms of BWH decomposition might be too restrictive, as it excludes possible subtraction schemes that do not fall into the class of Rota--Baxter type ones. In this paper we present an exponential algorithm to perform perturbative renormalization (the term ``exponential'' refers to the way the algorithm is constructed and was also chosen for its similarity with the classical ``additive'' and ``multiplicative'' terminologies). One advantage of this method, besides its group-theoretical naturality, is that it does not rely on the Rota--Baxter property. Indeed, the exponential method is less restrictive than the BPHZ method in the Hopf algebraic picture. It only requires a projector $P_-$ (used to isolate the divergences of regularized amplitudes) such that the image of the associated orthogonal projector, $P_+:=id-P_-$, forms a subalgebra. This constraint on the image of $P_+$ reflects the natural assumption that products of finite regularized amplitudes are supposed to be finite. 

Let us mention that the very process of exponential renormalization leads to the introduction of new objects and ideas in the algebro-combinatorial approach to perturbative QFT. Particularly promissing are the ones of counterfactors and order $n$ bare coupling constants, that fit particularly well some widespread ideas that do not always come with a rigorous mathematical foundation such as the one that ``in the end everything boils down in perturbative QFT to power series substitutions''. The notion of  order $n$ bare coupling constants makes such a statement very precise from the algebraic point of view.

Let us also mention that the exponential method is a further development of ideas sketched in our earlier paper~\cite{KEFPatras} that pointed at a natural link between renormalization techniques and fine properties of Lie idempotents, with a particular emphasis on the family of Zassenhaus Lie idempotents. Here we do not further develop such aspects from the theory of free Lie algebras  \cite{reutenauer1993}, and refer to the aforementioned article for details on the subject.
     
\medskip         

The paper is organized as follows. The next section briefly recalls some general properties of graded Hopf algebras including the BWH decomposition of regularized Feynman rules viewed as Hopf algebra characters. We also dwell on the Fa\`a di Bruno Hopf algebra and prove an elementary but useful Lemma that allows the translation of the Dyson formula (relating bare and renormalized Green's functions) into the language of combinatorial Hopf algebras. In Section \ref{sect:Exp} we introduce the notion of $n$-regular characters and present an exponential recursion used to construct $m$-regular characters from $m-1$-regular ones. We conclude the article by introducing and studying a toy-model non-Rota--Baxter renormalization scheme on which the exponential recursion can be performed. We prove in particular that locality properties are preserved by this renormalization process.


\section{From Dyson to Fa\`a di Bruno}
\label{sect:D2F}

\subsection{Preliminaries}
\label{ssect:prelim}

In this section we introduce some mathematical structures to be used in the sequel. We also recall the BWH decomposition of Hopf algebra characters. Complementary details can be found, e.g. in \cite{EFGP, FGB, Manchon}.

Let us consider a graded, connected and commutative Hopf algebra $H=\bigoplus_{n \geq 0} H_n$ over the field $k$, or its pro-unipotent completion $\prod_{n \geq 0} H_n$. Recall that since the pioneering work of Pierre Cartier on formal groups \cite{CartierHGF}, it is well-known that the two types of Hopf algebras behave identically, allowing to deal similarly with finite sums $\sum_{n \leq N}h_n$, $h_n \in H_n$ and formal series $\sum_{n \in \NN}h_n$, $h_n \in H_n$. The unit in $H$ is denoted by $\un$. Natural candidates are the Hopf algebras of rooted trees and Feynman graphs~\cite{CKI,CKII} related to non-commutative geometry and pQFT, respectively. 

We remark here that graduation phenomena are essential for all our forthcoming computations, since in the examples of physical interest they incorporate information such as the number of loops (or vertices) in Feynman graphs, relevant for perturbative renormalization. The action of the grading operator $Y: H \to H$ is given by:
$$
    Y(h)  = \sum\limits_{n\in \mathbb{N}} n h_n \quad {\rm{for}} \quad h 
             = \sum\limits_{n\in \mathbb{N}}h_n\in\prod\limits_{n\in\mathbb{N}} H_n.
$$
We write $\epsilon$ for the augmentation map from $H$ to $H_0 = k \subset H$ and $H^+:=\bigoplus_{n=1}^{\infty} H_n$ for the augmentation ideal of $H$. The identity map of $H$ is denoted $id$. The product in $H$ is written $m_H$ and its action on elements simply by concatenation. The coproduct is written $\Delta$; we use Sweedler's notation and write $h^{(1)}\otimes h^{(2)}$ or $\sum_{j = 0}^n h_j^{(1)}\otimes h_{n-j}^{(2)}$ for $\Delta (h) \in \bigoplus_{j=0}^{n}H_j\otimes H_{n-j}$, $h\in H_n$. 

The space of $k$-linear maps from $H$ to $k$, $\Lin(H,k):=\prod_{n\in\mathbb{N}} \Lin(H_n,k)$, is naturally endowed with an associative unital algebra structure by the convolution product:
 \allowdisplaybreaks{
\begin{equation*}
    f\ast g := m_k\circ(f\otimes g)\circ\Delta : \qquad H
    \xrightarrow{\Delta} H \otimes H \xrightarrow{f \otimes g} k
    \otimes k \xrightarrow{m_{k}} k.
\end{equation*}}
The unit for the convolution product is precisely~the counit~$\epsilon : H \to k$. Recall that a character is a linear map $\gamma$ of unital algebras from $H$ to the base field $k$:
$$
    \gamma (hh') = \gamma(h)\gamma(h').
$$
The group of characters is denoted by $G$. With $\pi_n$, $n \in \mathbb{N}$, denoting the projection from  $H$ to $H_n$ we write $\gamma_{(n)}=\gamma \circ \pi_n$. An infinitesimal character is a linear map $\alpha$ from $H$ to $k$ such that:
$$
    \alpha (h h') = \alpha(h) \epsilon(h') + \epsilon (h) \alpha (h').
$$
As for characters, we write $\alpha(h) = \sum_{n \in\mathbb{N}} \alpha_{(n)} (h_n)$. We remark that by the definitions of characters and infinitesimal characters $\gamma_0(\un)=1$, that is $\gamma_0 = \epsilon$, whereas $\alpha_0(\un) = 0$, respectively. Recall that the graded  vector space $\frak{g}$ of infinitesimal characters is a Lie subalgebra of $\Lin(H,k)$ for the Lie bracket induced on the latter by the convolution product. 

Let $A$ be a commutative $k$-algebra, with unit~$1_A=\eta_A(1)$, $\eta_A : k \to A$ and with product~$m_A$, which we sometimes denote by a dot, i.e. $m_A(u\otimes v)=:u\cdot v$ or simply by concatenation. The main examples we have in mind are $A=\CC,\,A=\CC[[\varepsilon,\varepsilon^{-1}]$ and~$A=H$. We extend now the definition of characters and call an ($A$-valued) character of $H$ any algebra map from~$H$ to~$A$. In particular $H$-valued characters are simply algebra endomorphisms of $H$. We extend as well the notion of infinitesimal characters to maps from~$H$ to the commutative $k$-algebra $A$, that is:
$$
    \alpha(hh') = \alpha(h) \cdot e(h') + e(h) \cdot \alpha (h'),
$$
where $e:=\eta_A\circ\epsilon$ is now the unit in the convolution algebra $\Lin(H,A)$. Observe that infinitesimal characters can be alternatively defined as $k$-linear maps from $H$ to $A$ with $\alpha \circ \pi_0=0$ that vanish on the square of the augmentation ideal of $H$. The group (Lie algebra) of $A$-valued characters (infinitesimal characters) is denoted $G(A)$ ($\frak{g}(A)$) or $G_H(A)$ when we want to emphasize the underlying Hopf algebra.


\subsection{Birkhoff--Wiener--Hopf decomposition of $G(A)$}
\label{ssect:BWH}

In the introduction we already mentioned one of Connes--Kreimer's seminal insights into the algebro-combinatorial structure underlying the process of perturbative renormalization in QFT. In the context of DR+MS, they reformulated the BPHZ-method as a Birkhoff--Wiener--Hopf decomposition of regularized Feynman rules, where the latter are seen as an element in the group $G(A)$. Of pivotal role in this approach is a Rota--Baxter algebra structure on the target algebra $A=\CC[[\varepsilon,\varepsilon^{-1}]$.  

In general, let us assume that the commutative algebra~$A = A_+ \oplus A_-$ splits directly into the subalgebras $A_\pm=T_\pm(A)$ with $1_A \in A_+$, defined in terms of the projectors $T_-$ and $T_+:=id-T_-$. The pair $(A,T_-)$ is a special case of a (weight one) Rota--Baxter algebra \cite{EFGP2} since $T_-$, and similarly $T_+$, satisfies the (weight one RB) relation:
\begin{equation}
    T_-(x)\cdot T_-(y) + T_-(x\cdot y) = T_-\bigl(T_-(x)\cdot y +
    x\cdot T_-(y)\bigr), \qquad x,y \in A.
    \label{eq:RBR}
\end{equation}
One easily shows that $\Lin(H,A)$ with an idempotent operator $\mathcal{T}_-$ defined by $\mathcal{T}_-(f)=T_- \circ f$, for $f\in \Lin(H,A)$, is a (in general non-commutative) unital Rota--Baxter algebra (of weight one). 

The Rota--Baxter property~\eqref{eq:RBR} implies that $G(A)$ decomposes as a set as the product of two subgroups:
$$
	G(A) = G_-(A)\ast G_+(A), \quad{\rm{where}}\quad G_\pm(A) 
	         = \exp^*(\mathcal{T}_\pm(\frak{g}(A))).
$$

\begin{cor} \cite{CKII,EFGP} \label{cor:ck-Birkhoff}
For any $\gamma \in G(A)$ the unique characters $\gamma_+\in G_{+}(A)$ and $\gamma_-^{-1}\in G_{-}(A)$ in the
decomposition of $G(A) = G_-(A)\ast G_+(A)$ solve the equations:
\begin{equation}
    \gamma_{\pm} = e \pm \mathcal{T}_{\pm}(\gamma_{-} \ast (\gamma - e)).
\label{eq:BogoliubovFormulae}
\end{equation}
That is, we have Connes--Kreimer's Birkhoff--Wiener--Hopf decomposition:
\begin{equation}
    \gamma = \gamma_-^{-1} \ast \gamma_+.
\label{eq:BCHbirkhoff}
\end{equation}
\end{cor}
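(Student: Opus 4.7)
The plan is to establish the recursive equations (\ref{eq:BogoliubovFormulae}) first and then derive the Birkhoff--Wiener--Hopf decomposition (\ref{eq:BCHbirkhoff}) as an algebraic consequence. I would begin by noting that $\gamma_- = e - \mathcal{T}_-(\gamma_- \ast (\gamma - e))$ defines $\gamma_-$ uniquely by induction on the grading: since $\gamma - e$ vanishes on $H_0 = k\un$ and $H$ is graded connected, the value of $\gamma_- \ast (\gamma - e)$ on $H_n$ depends only on the values of $\gamma_-$ on homogeneous components of degree strictly less than $n$, so the recursion closes at each degree. The map $\gamma_+$ is then defined by the twin formula.

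The decomposition identity is an immediate algebraic consequence of $\mathcal{T}_+ + \mathcal{T}_- = \id$ on $\Lin(H,A)$. Indeed, subtracting the two defining equations gives
$$
    \gamma_+ - \gamma_- = (\mathcal{T}_+ + \mathcal{T}_-)\bigl(\gamma_- \ast (\gamma - e)\bigr) = \gamma_- \ast \gamma - \gamma_-,
$$
so $\gamma_+ = \gamma_- \ast \gamma$, and since $\gamma_-(\un) = 1$ ensures invertibility of $\gamma_-$ for the convolution in the pro-unipotent completion, we conclude $\gamma = \gamma_-^{-1} \ast \gamma_+$. By construction, $(\gamma_- - e)(H^+) \subseteq A_-$ and $(\gamma_+ - e)(H^+) \subseteq A_+$.

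The main obstacle is showing that $\gamma_\pm$ are \emph{characters} rather than merely linear maps; this is where the Rota--Baxter relation (\ref{eq:RBR}) enters essentially. I would proceed by induction on the total degree $|x|+|y|$ for $x, y \in H^+$, computing $\gamma_-(xy)$ from the recursion and expanding $\Delta(xy) = \Delta(x)\Delta(y)$ (valid because $H$ is commutative and $\Delta$ is an algebra map). After reorganising the resulting Sweedler sum so that the inductive hypothesis applies to all convolution terms of lower bidegree, one applies the Rota--Baxter identity for $T_-$ to recombine the remaining boundary terms into the product $\gamma_-(x)\gamma_-(y)$. The same induction, using the fact that $T_+ = \id_A - T_-$ is itself Rota--Baxter of weight one, yields multiplicativity of $\gamma_+$.

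Once multiplicativity is established, membership in $G_\pm(A) = \exp^\ast(\mathcal{T}_\pm(\frak{g}(A)))$ follows because $A_\pm$ are \emph{subalgebras} of $A$: each term of the series $\log^\ast(\gamma_\pm) = (\gamma_\pm - e) - (\gamma_\pm - e)^{\ast 2}/2 + \cdots$ is a convolution of maps taking values in $A_\pm$ on $H^+$, hence lies in $A_\pm$, so $\log^\ast(\gamma_\pm) \in \mathcal{T}_\pm(\frak{g}(A))$. Uniqueness of the factorisation then follows from $G_-(A) \cap G_+(A) = \{e\}$: any element of this intersection would take values in $A_+ \cap A_- = 0$ on $H^+$, forcing it to equal $e$; consequently, any other decomposition $\gamma = \alpha^{-1} \ast \beta$ with $\alpha \in G_-(A)$, $\beta \in G_+(A)$ must satisfy $\alpha \ast \gamma_-^{-1} = \beta \ast \gamma_+^{-1} = e$.
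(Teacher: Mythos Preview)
Your argument is correct and follows the classical Connes--Kreimer route: well-definedness of the Bogoliubov recursion by induction on the grading, the algebraic identity $\gamma_+ = \gamma_- \ast \gamma$ from $\mathcal{T}_+ + \mathcal{T}_- = \id$, multiplicativity of $\gamma_\pm$ via the weight-one Rota--Baxter identity, and uniqueness from $A_+ \cap A_- = 0$. Each step is sound.

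The paper, however, does not prove the corollary directly; it is quoted from \cite{CKII,EFGP} with only the remark that the statement holds if and only if $T_-$ is Rota--Baxter and that uniqueness comes from idempotence. The paper's own contribution to this result appears later, in Section~\ref{ssect:BWHexp} and Proposition~\ref{prop:ZassenBogo}, where the BWH decomposition is recovered as the Rota--Baxter specialisation of the exponential method: one builds the counterterm as an ordered product $\Upsilon(n) = \Upsilon_n^- \ast \cdots \ast \Upsilon_1^-$ of exponentials of infinitesimal characters, and then shows (equation~(\ref{BWHrb}) and the proof of Proposition~\ref{prop:ZassenBogo}) that under the RB hypothesis this product satisfies Bogoliubov's recursion, hence coincides with $\gamma_-$ by uniqueness.

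So your approach and the paper's are genuinely different. Yours is the standard self-contained argument, with the RB identity used pointwise inside an induction on $|x|+|y|$ to force multiplicativity. The paper's route instead constructs $\gamma_-$ as a limit of characters that are characters \emph{by construction} (convolution exponentials of infinitesimal characters), so multiplicativity is never in doubt; the RB property enters only to show that $P_+(\Upsilon(k-1)(y)) = 0$, i.e.\ that the exponential counterterm actually lands in $A_-$. The advantage of the paper's viewpoint is that the construction survives when $P_-$ is not Rota--Baxter (one still gets a factorisation $\gamma = \Upsilon_\infty^{-1} \ast \gamma^+$ with $\gamma^+$ regular), whereas your direct inductive proof of multiplicativity genuinely breaks without the RB identity.
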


Note that this corollary is true if and only if the operator $T_-$ on $A$ is of Rota--Baxter type. That is, uniqueness of the  decomposition follows from the idempotence of the map $T_-$. In fact, in the sequel we will show that this result is a special case of a more general decomposition of characters.


\subsection{The Fa\`a di Bruno Hopf algebra and a key lemma}
\label{ssect:FdBlemma}

Another example of combinatorial Hopf algebra, i.e. a graded, connected, commutative bialgebra with a basis indexed by combinatorial objects, which we will see to be acutely important in the sequel, is the famous Fa\`a di Bruno Hopf algebra $F$, for details see e.g.~\cite{BFFK,FGB,JR}. 

Recall that for series, say of a real variable $x$:
$$
	f(x) = \sum_{n=0}^\infty a_n(f)\,x^{n+1}, 
	\quad 
	h(x) = \sum_{n=0}^\infty a_n(h)\,x^{n+1}, 
	\ {\rm{with}}\ a_0(f) = a_0(h) = 1,
$$
the composition is given by:
$$
	f\bigl(h(x)\bigr) = \sum_{n=0}^\infty a_n\bigl(f\circ h\bigr)\,x^{n+1}=\sum\limits_{n=0}^\infty a_n(f)(h(x))^{n+1}.
$$
It defines the group structure on: 
$$
	G_F:=\Bigl\{f(x) = \sum_{n=0}^\infty a_n(f)\,x^{n+1} \ | \ a_n(f) \in \mathbb{C}, a_0(f) = 1\Bigr\}.
$$ 
One may interpret the functions $a_n$ as a derivation evaluated at $x = 0$:
$$
	a_n(f) =\frac{1}{(n+1)!}\frac{d^{n+1}f}{dx^{n+1}}(0).
$$
The coefficients $a_n\bigl(f \circ h\bigr)$ are given by:
$$
	a_n(f \circ h) = \sum_{k=0}^n a_k(f) \sum_{l_0 + \cdots + l_k=n-k \atop l_i \ge 0, i=0,\ldots,k}
	a_{l_0}(h)\cdots a_{l_k}(h).
$$
For instance, with an obvious notation, the coefficient of~$x^4$ in the composed series is given by
$$
	f_3 + 3f_2h_1 + f_1(h_1^2 + 2h_2) + h_3.
$$
The action of these coefficient functions on the elements of the group $G_F$ implies a pairing:
$$ 
	\langle a_n ,  f \rangle := a_n(f).
$$
The group structure on $G_F$ allows to define the structure of a commutative Hopf algebra on the polynomial ring spanned by the $a_n$, denoted by $F$, with coproduct:
$$
	\Delta_F(a_n)  = \sum_{k=0}^n \sum_{l_0 + \cdots + l_k=n-k \atop l_i \ge 0, i=0,\ldots,k}
						a_{l_0} \cdots a_{l_k} \otimes a_k .
$$ 
Notice that, using the pairing, an element $f$ of $G_F$ can be viewed as the $\RR[x]$-valued character $\hat{f}$ on $F$ characterized by: $\hat{f}(a_n):=a_n(f)x^n.$  The composition of formal power series translates then into the convolution product of characters: $\widehat{f(h)}=\hat{h} * \hat{f}$.

Let us condense this into what we call the Fa\`a di Bruno formula, that is,  define $a:=\sum_{n \ge 0} a_n$. Then $a$ satisfies:
\begin{equation}
\label{FdBformula}
	\Delta_F(a)  = \sum_{n \ge 0} a^{n+1} \otimes a_n.
\end{equation}

Note that subindices indicate the graduation degree. We prove now a technical lemma, important in view of applications to perturbative renormalization. As we will see further below, it allows to translate the Dyson formulas for renormalized and bare 1PI Green's functions into the language of Hopf algebras.

\begin{lem} \label{lem:FaBlemma}
Let $H=\prod_{n \geq 0} H_n$ be a complete graded commutative Hopf algebra, which is an algebra of formal power series containing the free variables $f_1,\ldots,f_n,\ldots$. We assume that $f_i$ has degree $i$ and write $f = \un + \sum_{k>0} f_k$ (so that, in particular, $f$ is invertible). If $\Delta(f)=\sum_n f \alpha^n\otimes f_n$, where  $\alpha=\sum_{n \ge 0} \alpha_n$ and the $\alpha_n$, $n>0$, are algebraically independent as well as algebraically independent from the $f_i$, then $\alpha$ satisfies the Fa\`a di Bruno formula:
$$
	\Delta(\alpha)=\sum\limits_{n \ge 0} \alpha^{n+1}\otimes \alpha_n.
$$
\end{lem}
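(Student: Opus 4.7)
The plan is to exploit coassociativity $(\Delta\otimes\id)\Delta(f) = (\id\otimes\Delta)\Delta(f)$, combined with the algebraic independence of the $f_n$'s, to extract a two-tensor identity that determines $\Delta(\alpha)$ uniquely.

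First I would expand both sides of coassociativity. The left side, using that $\Delta$ is an algebra morphism, is $\sum_n \Delta(f)\Delta(\alpha)^n\otimes f_n$. For the right side I need $\Delta(f_n)$, which I can read off from the hypothesis by extracting graded components: writing $f=\sum_k f_k$ and collecting the degree-$(i,j)$ part of $\Delta(f)=\sum_n f\alpha^n\otimes f_n$ forces $\Delta(f_k)=\sum_{n=0}^{k}(f\alpha^n)_{k-n}\otimes f_n$. Substituting then produces a three-tensor identity with $f_q$ appearing purely as a free variable on the third factor.

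The crucial step is to exploit the algebraic independence of the $f_n$'s to match ``coefficients of $f_n$'' on the third tensor factor. Since on both sides that factor is simply a generator $f_q$ and not a polynomial in the $f_i$'s, the extraction is legitimate and yields, for every $n\geq 0$,
$$\Delta(f)\,\Delta(\alpha)^n \;=\; \sum_{p\geq n} f\alpha^p \otimes (f\alpha^n)_{p-n}.$$
The case $n=1$,
$$\Delta(f)\,\Delta(\alpha) \;=\; \sum_{p\geq 1} f\alpha^p \otimes (f\alpha)_{p-1},$$
is the decisive one. To finish, I would verify that the candidate $X:=\sum_{n\geq 0}\alpha^{n+1}\otimes\alpha_n$ satisfies $\Delta(f)\cdot X$ equal to this right-hand side — a routine double-sum expansion in which, after writing $(f\alpha)_{p-1}=\sum_{i+j=p-1}f_i\alpha_j$, both sides collapse to $\sum_{k,n\geq 0} f\alpha^{k+n+1}\otimes f_k\alpha_n$. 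Since $f=\un+\sum_{k>0}f_k$ is invertible in $H$ and $\Delta$ is an algebra morphism, $\Delta(f)$ is invertible in the completion of $H\otimes H$; cancelling it delivers the Fa\`a di Bruno formula for $\alpha$.

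The most delicate point I anticipate is this extraction step: one must justify precisely in which algebra the coefficient-matching takes place, given that $f_i$'s also occur inside the first two tensor factors (hidden in the explicit $f$ and $f\alpha^n$). Once this is set up cleanly — for instance by viewing $H$ as a power-series ring in the independent generators $\{f_i\}\cup\{\alpha_j\}$ and taking the linear functional ``coefficient of $f_n$'' on the third slot — the rest of the argument is computational bookkeeping and the invertibility of $\Delta(f)$.
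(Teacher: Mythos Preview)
Your proposal is correct and follows essentially the same route as the paper: apply coassociativity to $f$, project onto the component where the third tensor factor lies in $H_1$ (equivalently, extract the coefficient of $f_1$), and then cancel an invertible factor. The only minor difference is in the final step: the paper cancels $f$ on the first tensor slot and then isolates the $f_0$-component on the second slot using the algebraic independence of the $\alpha_i$ from the $f_j$, whereas you verify that the candidate $X=\sum_{n\ge 0}\alpha^{n+1}\otimes\alpha_n$ satisfies the same identity and then cancel the invertible element $\Delta(f)$; both are valid and equally short.
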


\begin{proof}
Indeed, let us make explicit the associativity of the coproduct, $(\Delta \otimes id ) \circ \Delta=(id \otimes\Delta )\circ \Delta$. First:
$$
	\sum\limits_{n \ge 0} \Delta(f\alpha^n) \otimes f_n 
						= \sum\limits_{n \ge 0} f\alpha^n \otimes \Delta(f_n)
						= \sum\limits_{n,p \leq n}f\alpha^n\otimes (f\alpha^{n-p})_p \otimes f_{n-p}.
$$
Now we look at the component of this identity that lies in the subspace $H \otimes H \otimes H_1$ and get:
$$
	\Delta(f\alpha)=\sum\limits_{n \ge 0}f\alpha^n \otimes (f\alpha)_{n-1},
$$
that is:
$$
	\sum\limits_{n \ge 0} f\alpha^n\alpha_{(1)}\otimes f_n\alpha_{(2)}
				=\sum\limits_{n,p<n}f\alpha^n\otimes f_p\alpha_{n-p-1}.
$$
Since $f$ is invertible:
$$
	\sum\limits_{n \ge 0}\alpha^n\alpha_{(1)}\otimes f_n\alpha_{(2)}
				=\sum\limits_{n,p<n}\alpha^n\otimes f_p\alpha_{n-p-1}.
$$
From the assumption of algebraic independence among the $\alpha_i$ and $f_j$, we get, looking at the component associated to $f_0=1$ on the right hand side of the above tensor product:
$$
	\Delta(\alpha)	=	\alpha_{(1)}\otimes \alpha_{(2)}
				=	\sum\limits_{n \ge 0}\alpha^{n+1}\otimes \alpha_{n}.
$$
\end{proof}

\begin{cor}
With the hypothesis of the Lemma, the map $\chi$ from $F$ to $H$, $a_n\longmapsto \alpha_n$ is a Hopf algebra map. In particular, if $f$ and $g$ are in $G_H(\RR)$, $f\circ\chi$ and $g\circ\chi$ belong to $G_F(\RR)$ and:
$$
	\sum\limits_{n \ge 0} g\ast f(\alpha_n)x^{n+1} 	= \sum\limits_{n  \ge 0} (g\circ\chi) \ast (f\circ\chi) (a_n)x^{n+1}
								 		= f\circ\chi(g\circ\chi),
$$
where in the last equality we used the identification of $G_F(\RR)$ with $x+x\RR[x]$ to view $f\circ\chi$ and $g\circ \chi$ as formal power series.
\end{cor}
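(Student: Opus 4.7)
The plan is to construct $\chi$ using the freeness of $F$ as a polynomial algebra, verify its compatibility with the coproduct by comparing the two Fa\`a di Bruno formulas, and then deduce the character identity from functoriality of convolution under Hopf algebra maps, combined with the classical interpretation of $G_F(\RR)$ as a substitution group already recalled in the excerpt.

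First, since $F$ is the polynomial algebra on the generators $a_n$ (with $a_0=\un$), the assignment $a_n \mapsto \alpha_n$ extends uniquely to a graded algebra morphism $\chi : F \to H$. To see that $\chi$ is also a coalgebra map, I would apply $\chi\otimes\chi$ to the Fa\`a di Bruno formula \eqref{FdBformula}: using $\chi(a)=\alpha$, one gets
\begin{equation*}
(\chi\otimes\chi)\circ\Delta_F(a) \;=\; \sum_{n\ge 0}\alpha^{n+1}\otimes\alpha_n \;=\; \Delta(\alpha),
\end{equation*}
where the last equality is precisely the conclusion of Lemma \ref{lem:FaBlemma}. Extracting the component of total graduation $n$ yields $(\chi\otimes\chi)\circ\Delta_F(a_n) = \Delta(\alpha_n)$ for every $n$. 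Since both $(\chi\otimes\chi)\circ\Delta_F$ and $\Delta\circ\chi$ are algebra morphisms from $F$ to $H\otimes H$, agreement on the generating family $\{a_n\}$ forces agreement on all of $F$. Compatibility with the counit is immediate from $\chi(a_0)=\alpha_0=\un$, so $\chi$ is a bialgebra morphism between connected graded commutative Hopf algebras, hence automatically a Hopf algebra morphism.

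For the remaining assertion, pre-composition with the Hopf algebra map $\chi$ sends $G_H(\RR)$ into $G_F(\RR)$, so $f\circ\chi$ and $g\circ\chi$ are indeed characters of $F$. Because $\chi$ is a coalgebra map, convolution is functorial, which gives $(g\ast f)\circ\chi = (g\circ\chi)\ast(f\circ\chi)$; evaluating both sides at $a_n$ and summing weighted by $x^{n+1}$ yields the first equality. The second equality is then just the classical translation $\widehat{f\circ h}=\hat h\ast\hat f$ recorded immediately before the Lemma, which under the identification of $G_F(\RR)$ with the substitution group $x+x\RR[[x]]$ turns the convolution product of $F$-characters into composition of formal power series. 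The only non-formal ingredient in the whole argument is the intertwining $(\chi\otimes\chi)\circ\Delta_F = \Delta\circ\chi$, and the main (mild) obstacle is the bookkeeping required to read this off from the two Fa\`a di Bruno expansions; once the Lemma is available this reduces to a degree-by-degree comparison, and the remainder of the proof is a direct consequence of functoriality.
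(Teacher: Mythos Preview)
Your argument is correct and is exactly the natural unfolding the paper has in mind; in fact the paper gives no separate proof of the Corollary, treating it as an immediate consequence of the Lemma together with the standard identification $\widehat{f\circ h}=\hat h\ast\hat f$ recalled just before. Your write-up supplies precisely the details (freeness of $F$, degree-by-degree comparison of the two Fa\`a di Bruno coproducts, and functoriality of convolution under coalgebra maps) that the paper leaves implicit.
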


In other terms, properties of $H$ can be translated into the language of formal power series and their compositions.


\section{The exponential method}
\label{sect:Exp} 

Let $H=\bigoplus_{n \geq 0} H_n$ be an arbitrary graded connected commutative Hopf algebra and $A$ a commutative $k$-algebra with unit $1_A=\eta_A(1)$. Recall that $\pi_n$ stands for the projection on $H_n$ orthogonally to the other graded components of $H$. As before, the group of characters with image in $A$ is denoted by $G(A)$, with unit $e:=\eta_A \circ \epsilon$. We assume in this section that the target algebra $A$ contains a subalgebra $A_+$, and that there is a linear projection map $P_+$ from $A$ onto $A_+$. We write $P_-:=id - P_+$.

The purpose of the present section is to construct a map from $G(A)$ to $G(A_+)$. In the particular case of a multiplicative renormalizable perturbative QFT, where $H$ is a Hopf algebra of Feynman diagrams and $A$ the target algebra of regularized Feynman rules, this map should send the corresponding Feynman rule character $\psi \in G(A)$ to a renormalized, but still regularized, Feynman rule character $R$.

The particular claim of $A_+ \subset A$ being a subalgebra implies $G(A_+)$ being a subgroup. This reflects the natural assumption, motivated by physics, that the resulting -renormalized- character  $R \in G(A_+)$ maps products of graphs into $A_+$, i.e. $R(\Gamma_1\Gamma_2)=R(\Gamma_1)R(\Gamma_2) \in A_+$. Or, to say the same, products of finite and regularized amplitudes are still finite. In the case where the target algebra has the Rota--Baxter property, the map from $G(A)$ to $G(A_+)$ should be induced by the BWH decomposition of characters.


\subsection{An algorithm for constructing regular characters}
\label{ssect:algorithm}

We first introduce the notion of {$n$-regular} characters. Later we identify them with characters renormalized up to degree $n$.

\begin{defn} \label{def:regular}
A character $\varphi \in G(A)$ is said to be regular up to order $n$, or $n$-regular, if ${P_+} \circ \varphi_{(l)} = \varphi_{(l)}$ for all  $l \leq n$.  A character is called regular if it is $n$-regular for all $n$. 
\end{defn}

In the next proposition we outline an iterative method to construct a regular character in $G(A_+)$ starting with an arbitrary one in $G(A)$. The iteration proceeds in terms of the grading of $H$. 

\begin{prop}   \label{prop:Exp1}
Let $\varphi \in G(A)$ be regular up to order $n$. Define $\mu^{\varphi}_{n+1}$ to be the linear map which is zero on $H_i$ for $i \not= n+1$ and: 
$$
	\mu^{\varphi}_{n+1}:=P_- \circ \varphi \circ \pi_{n+1}=P_- \circ \varphi_{(n+1)}.
$$ 
Then
\begin{enumerate}

\item
$\mu^{\varphi}_{n+1}$ is an infinitesimal character.  

\item 
The convolution exponential $\Upsilon^-_{n+1} := \exp^*{(-\mu^{\varphi}_{n+1})}$ is therefore a character.  
 
\item
The product $\varphi_{n+1}^+:=\Upsilon^-_{n+1}  \ast \varphi$ is a regular character up to order $n+1$. 

\end{enumerate}
\end{prop}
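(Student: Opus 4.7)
The plan is to verify the three items in order; (2) will be immediate from (1) by the standard fact that in a graded connected commutative Hopf algebra the convolution exponential is a bijection between infinitesimal characters and characters, so I will focus on (1) and (3).

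For (1), I would check the infinitesimal character identity $\mu^{\varphi}_{n+1}(hh') = \mu^{\varphi}_{n+1}(h)e(h')+e(h)\mu^{\varphi}_{n+1}(h')$ by expanding $h,h'$ into homogeneous components. Since $\mu^{\varphi}_{n+1}$ is concentrated in degree $n+1$, only bi-homogeneous pieces $h_i\otimes h'_j$ with $i+j=n+1$ matter. The cases $i=0$ or $j=0$ reduce immediately to the right-hand side via the counit property. The essential case is $i,j>0$: here $i,j\leq n$, so by $n$-regularity both $\varphi(h_i)$ and $\varphi(h'_j)$ lie in $A_+$; since $A_+$ is a subalgebra (this is the crucial assumption!) their product is in $A_+$, so $P_-$ annihilates it, matching the zero contributed by the right-hand side. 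This is the step where the subalgebra hypothesis is genuinely needed.

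For (3), the key observation is that $\mu^{\varphi}_{n+1}$ vanishes on $\bigoplus_{l\neq n+1}H_l$, so $(\mu^{\varphi}_{n+1})^{*k}$ vanishes in all degrees $<k(n+1)$. I would then compute $(\Upsilon^-_{n+1}\ast\varphi)_{(l)}$ separately for $l\leq n$ and $l=n+1$. For $l\leq n$, the only term in $\exp^*(-\mu^{\varphi}_{n+1})$ surviving on $H_l$ is the $k=0$ term $\epsilon$, so the convolution collapses to $\varphi_{(l)}$, which is already in $A_+$ by $n$-regularity. For $l=n+1$, in the coproduct $\Delta(h)=h\otimes\un+\un\otimes h+\sum h^{(1)}_i\otimes h^{(2)}_j$ with $i,j>0$, the mixed terms drop out because $\Upsilon^-_{n+1}$ coincides with $\epsilon$ on homogeneous components of degree strictly between $0$ and $n+1$; what remains is $\Upsilon^-_{n+1}(h)+\varphi(h)=-\mu^{\varphi}_{n+1}(h)+\varphi(h)=-P_-\varphi(h)+\varphi(h)=P_+\varphi(h)\in A_+$.

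The main obstacle, if any, is the careful degree bookkeeping in the exponential expansion for part (3); once one isolates the principle that $(\mu^{\varphi}_{n+1})^{*k}$ lives in degrees $\geq k(n+1)$, the whole computation reduces to a one-line check in each of the two relevant ranges of $l$. I do not anticipate further difficulties, since (2) is purely formal once (1) is in hand, and (1) rests on the single algebraic input that $A_+$ is a subalgebra of $A$.
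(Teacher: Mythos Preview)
Your proposal is correct and follows essentially the same approach as the paper's proof: for (1) you use $n$-regularity plus the subalgebra property of $A_+$ to kill $P_-$ on products of lower-degree elements, and for (3) you use the degree concentration of $\mu^{\varphi}_{n+1}$ to reduce the convolution to $\varphi$ in degrees $\leq n$ and to $-\mu^{\varphi}_{n+1}+\varphi=P_+\varphi$ in degree $n+1$. The only cosmetic difference is that the paper verifies (1) via the equivalent characterization of infinitesimal characters as maps vanishing on $(H^+)^2$, checking directly that $\mu^{\varphi}_{n+1}(xz)=0$ for $x,z\in H^+$ with $xz\in H_{n+1}$, whereas you verify the derivation identity componentwise; the substance is identical.
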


Note that we use the same notation for the projectors $P_\pm$ on $A$ and the ones defined on $\Lin(H,A)$. 

\begin{remark} 
Let us emphasize two crucial points. First, we see the algebraic naturalness of the particular assumption on $A_+$ being a subalgebra. Indeed, it allows for a simple construction of infinitesimal characters from characters in terms of the projector $P_-$. Second, at each order in the presented process we stay strictly inside the group $G(A)$. This property does not hold for other recursive renomalization algorithms. For example, in the BPHZ case, the recursion takes place in the larger algebra $Lin(H,A)$, see e.g. \cite{EFMP}.
\end{remark}

\begin{proof} 
Let us start by showing that $\mu^{\varphi}_{n+1}$ is an infinitesimal character. That is, its value is zero on any non trivial product of elements in $H$. In fact, for $y=xz \in H_{n+1}$, $x,z\notin H_0$, 
$$
	\mu^{\varphi}_{n+1} (y)	=	P_-(\varphi (y)) = P_-(\varphi (x)\varphi(z))
					     	= 	P_-({P_+}\varphi (x) {P_+}\phi (y)),
$$ 
since $\varphi$ is $n$-regular by assumption. This implies that $\mu^{\varphi}_{n+1}(y)=P_-\circ P_+(P_+\varphi (x)P_+\varphi(z))=0$ as the image of $P_+$ is a subalgebra in $A$.  

The second assertion is true for any infinitesimal character, see e.g. \cite{EFGP}.  

The third one follows from the next observations: 
\begin{itemize} 

\item 
For degree reasons (since $\mu^{\varphi}_{n+1}= 0$ on $H_k, \ k\leq n$), $\varphi_{n+1}^+=\exp^*(-\mu^{\varphi}_{n+1}) \ast \varphi = \varphi $ on $H_k, \ k\leq n$, so that $\exp^*(-\mu^{\varphi}_{n+1})\ast \varphi$ is regular up to order $n$.  

\item 
In degree $n+1$:  let $y \in H_{n+1}$. With a Sweedler-type notation for the reduced coproduct $\Delta (y) - y \otimes 1-1\otimes y = y'_{(1)}\otimes y'_{(2)}$, we get:
\allowdisplaybreaks{
\begin{eqnarray}
	\exp^*(-\mu^{\varphi}_{n+1})\ast \varphi (y)&=&
	\exp^*(-\mu^{\varphi}_{n+1})(y)+\varphi(y) + \exp^*(-\mu^{\varphi}_{n+1})(y'_{(1)})\varphi(y'_{(2)}) \nonumber\\
	                                           &=&-\mu^{\varphi}_{n+1}(y)+\varphi (y) = P_+\varphi(y), \label{almostreg}
\end{eqnarray}}	
which follows from $\exp^*(-\mu^{\varphi}_{n+1})$ being zero on $H_i$, $1 \leq i \leq n$ and $\exp^*(-\mu^{\varphi}_{n+1})=-\mu^{\varphi}_{n+1}$ on $H_{n+1}$.  Hence, this implies immediately:  
\allowdisplaybreaks{
\begin{eqnarray*}
	P_+((\exp^*(-\mu^{\varphi}_{n+1})\ast \varphi )(y))&=& P_+(P_+\varphi(y))\\
									            &=& P_+\varphi(y) 
									               =    (\exp^*(-\mu^{\varphi}_{n+1})\ast \varphi)(y). 
\end{eqnarray*}}									
\end{itemize}
\end{proof}

Note the following particular fact.  When iterating the above construction of regular characters, say, by going from a $n-1$-regular character $\varphi_{n-1}^+$ to the $n$-regular character $\varphi_{n}^+$, the $n-1$-regular character is by construction {\it{almost regular}} at order $n$. By this we mean that $\varphi_{n}^+(H_n)$ is given by applying $P_+$ to  $\varphi_{n-1}^+(H_n)$, see (\ref{almostreg}). This amounts to a simple subtraction, i.e. for $y \in H_n$:
$$
	\varphi_{n}^+(y) = P_+(\varphi_{n-1}^+(y))
	                               = \varphi_{n-1}^+(y) - P_-(\varphi_{n-1}^+(y)).
$$      
Observe that by construction for $y \in H_{n}$:
\begin{equation}
\label{preparation}
	\varphi_{n-1}^+(y)=\varphi_{n-2}^+(y) - P_-(\varphi_{n-2}^+(y^{(1)}_{n-1}))\varphi_{n-2}^+(y^{(2)}_1),
\end{equation} 
where the reader should recall the notation $\Delta(y)=\sum_{i=0}^n y_i^{(1)}\otimes y_{n-i}^{(2)}$ making the grading explicit in the coproduct. Further below we will interpret these results in the context of perturbative renormalization of Feynman graphs: for example, when $\Gamma$ is a UV divergent 1PI diagram of loop order $n$, the order one graph $\Gamma^{(2)}_1$ on the right-hand side of the formula consists of the unique one loop primitive cograph. That is, $\Gamma^{(2)}_1$ follows from $\Gamma$ with all its 1PI UV divergent subgraphs reduced to points. In the literature this is denoted as $res(\Gamma)=\Gamma^{(2)}_1$.

The following propositions capture the basic construction of a character regular to all orders from an arbitrary character. We call it the exponential method. 

\begin{prop} (Exponential method) \label{prop:Exp2}
We consider the recursion: $\Upsilon_{0}^- := e$, $\varphi_{0}^+:=\varphi$, and:
$$
	\varphi^+_{n+1}:= \Upsilon^-_{n+1} * \varphi_{n}^+,
$$ 
where $ \Upsilon_{n+1}^- :=\exp^*(- P_- \circ \varphi_{n}^+ \circ \pi_{n+1})$. Then, we have that $\varphi^+:=\varphi^+_{\infty} := \lim\limits_\rightarrow\varphi^+_{n}$ is regular to all orders. Moreover, $\Upsilon_{\infty}^- \ast \varphi= \varphi^+$, where:
$$
	\Upsilon_{\infty}^-:=\lim\limits_\rightarrow \Upsilon(n)
$$
and: 
$$
	\Upsilon(n):= \Upsilon^-_{n}  \ast \cdots \ast \Upsilon_{1}^-.
$$  
\end{prop}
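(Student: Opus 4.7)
The plan is to proceed by induction on $n$, invoking Proposition~\ref{prop:Exp1} at each step, and then to pass to the limit in the completed convolution algebra $\Lin(H,A)=\prod_{n\ge 0}\Lin(H_n,A)$ by a degree-by-degree stabilization argument.

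First I would prove inductively that each $\varphi_n^+$ is a character that is $n$-regular. The base case $n=0$ is trivial since $\varphi_0^+=\varphi$ sends $\un$ to $1_A\in A_+$. For the inductive step, assuming $\varphi_n^+$ is $n$-regular, Proposition~\ref{prop:Exp1} applied to $\varphi_n^+$ gives exactly that $\Upsilon_{n+1}^-=\exp^*(-\mu_{n+1}^{\varphi_n^+})$ is a character and that $\varphi_{n+1}^+=\Upsilon_{n+1}^-\ast\varphi_n^+$ is a character that is $(n+1)$-regular.

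Second, I would check that the direct limits make sense. The infinitesimal character $\mu_{n+1}^{\varphi_n^+}$ is supported in degree $n+1$, so its $k$-fold convolution power is supported in degree $k(n+1)$, and hence $\Upsilon_{n+1}^--e$ is supported in degrees $\ge n+1$. Since the convolution product $f\ast g$ evaluated on $H_m$ only involves the components of $f$ and $g$ in degrees $\le m$, this immediately yields that $\varphi_{n+1}^+$ and $\varphi_n^+$ coincide on $H_k$ for every $k\le n$, and likewise that the partial products $\Upsilon(n)$ stabilize on $H_k$ as soon as $n\ge k$. The limits $\varphi^+:=\lim_{\to}\varphi_n^+$ and $\Upsilon_\infty^-:=\lim_{\to}\Upsilon(n)$ are therefore well-defined elements of $\Lin(H,A)$; since they agree on each homogeneous component with the corresponding eventual character, and since multiplicativity can be tested degree by degree through the graded coproduct (using commutativity of $H$ and $A$), both limits are characters. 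By construction $\varphi^+$ is $n$-regular for every $n$.

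For the final identity, a direct induction unravels the recursion:
\[
	\varphi_{n+1}^+ \;=\; \Upsilon_{n+1}^-\ast\varphi_n^+ \;=\; \Upsilon_{n+1}^-\ast\Upsilon(n)\ast\varphi \;=\; \Upsilon(n+1)\ast\varphi .
\]
Passing to the limit, which is legitimate because for fixed $h\in H_m$ both sides are constant in $n$ for $n\ge m$, produces $\varphi^+=\Upsilon_\infty^-\ast\varphi$. The main obstacle is not conceptual but notational: one must keep careful track of which maps vanish in which degrees in order to justify that all infinite convolution products collapse to finite sums on each homogeneous component. Everything substantive has already been packaged into Proposition~\ref{prop:Exp1}; what remains is bookkeeping in the pro-nilpotent setting of Section~\ref{ssect:prelim}.
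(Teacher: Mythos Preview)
Your proposal is correct and is precisely the argument the paper has in mind: the paper states Proposition~\ref{prop:Exp2} without an explicit proof, treating it as an immediate consequence of iterating Proposition~\ref{prop:Exp1}, and your write-up supplies exactly that iteration together with the routine degree-stabilization bookkeeping needed to make sense of the direct limits. There is no alternative route in the paper to compare against.
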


\begin{remark} 
In the light of the application of the exponential method to perturbative renormalization in QFT, we introduce some useful terminology. We call  $\Upsilon_{l}^- :=\exp^*(- P_- \circ \varphi_{l-1}^+ \circ \pi_{l})$ the counterfactor of order $l$ and the product $\Upsilon(n):= \Upsilon^-_{n}  \ast \cdots \ast \Upsilon_{1}^- = \Upsilon^-_{n}  \ast \Upsilon(n-1)$ the counterterm of order $n$.  
\end{remark}


\subsection{On the construction of bare coupling constants}

The following two propositions will be of interest in the sequel when we dwell on the physical interpretation of the exponential method.  Let $A$ be as in Proposition \ref{prop:Exp1}. We introduce  a formal parameter $g$ which commutes with all elements in $A$, which we extend to the filtered complete algebra $A[[g]]$ (think of $g$ as the renormalized -i.e. finite- coupling constant of a QFT). The character $\varphi \in G(A)$ is extended to $\tilde\varphi \in G(A[[g]])$ so as to map $f = 1 + \sum_{k>0} f_k \in H$ to:
$$
	\tilde\varphi(f)(g) = 1 + \sum_{k>0} \varphi(f_k)g^k \in A[[g]].
$$
Notice that we emphasize the functional dependency of $\tilde\varphi(f)$ on $g$ for reasons that will become clear in our forthcoming developments.

Recall Lemma \ref{lem:FaBlemma}. We assume that $\Delta(f)=\sum_{n \ge 0} f \alpha^n \otimes f_n$ where $\alpha=\sum_{n \ge 0} \alpha_n \in H$ satisfies the Fa\`a di Bruno formula:
$$
	\Delta(\alpha)=\sum\limits_{n \ge 0} \alpha^{n+1} \otimes \alpha_n.
$$
Let $\tilde\varphi^+_{n+1}:= \tilde\Upsilon^-_{n+1} * \tilde\varphi_{n}^+ =\tilde\Upsilon(n+1) * \tilde\varphi \in G(A[[g]])$ be the $n+1$-regular character constructed via the exponential method from $\tilde\varphi \in G(A[[g]])$. Now we define for each counterfactor $\tilde\Upsilon^-_{l}$, $l\ge 0$ a formal power series in $g$, which we call the order $l$ bare coupling constant:
$$
	g_{(l)}(g) :=  \tilde\Upsilon^-_{l}(g\alpha)=g+\sum_{n > 0} a^{(l)}_n g^{n+1}, 
$$
$a^{(l)}_n:=\Upsilon^-_{l}( \alpha_n)$. Observe that $\tilde\Upsilon^-_{0}(g\alpha)=g\varepsilon(\un)=g$ and by construction $a^{(l)}_n=0$ for $n<l$. 

\begin{prop} (exponential counterterm and composition) \label{prop:ExpFaa1}
With the aforementioned assumptions, we find that applying the order $n$ counterterm $\tilde\Upsilon(n)$ to the series $g\alpha \in H$ equals the $n$-fold composition of the bare coupling constants $g_{(1)}(g), \cdots, g_{(n)}(g)$:
\allowdisplaybreaks{
\begin{eqnarray*}
	\tilde\Upsilon(n)(g\alpha)(g)&=& \tilde\Upsilon^-_{n}  \ast \cdots \ast \tilde\Upsilon_{1}^-(g\alpha)(g)\\
						  &=&g_{(1)}\circ \cdots \circ g_{(n)}(g).
\end{eqnarray*}}
\end{prop}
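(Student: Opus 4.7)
The plan is to proceed by induction on $n$, the base case $n=1$ being immediate from the definition $g_{(1)}(g) := \tilde\Upsilon^-_1(g\alpha)$. The central tool for the inductive step is the corollary to Lemma~\ref{lem:FaBlemma}: since by hypothesis $\alpha$ satisfies the Fa\`a di Bruno formula, $\chi : F \to H$, $a_n \mapsto \alpha_n$, is a Hopf algebra morphism, and pullback by $\chi$ therefore transports the convolution of $H$-characters into the convolution of $F$-characters, which under the identification $G_F \simeq x + xA[[x]]$ is exactly the composition of formal power series. Although the corollary is stated for $\RR$-valued characters, its proof extends verbatim to the $A[[g]]$-valued case: for any $\phi \in G(A[[g]])$, the series $\phi(g\alpha) = \sum_{n \ge 0} \phi(\alpha_n)\,g^{n+1}$ is the formal power series image of $\phi \circ \chi$, and for any two such characters
\[
	(\phi_1 \ast \phi_2)(g\alpha) \;=\; \phi_2(g\alpha)\Big|_{g \,\mapsto\, \phi_1(g\alpha)}.
\]
Note that the order of composition is reversed relative to convolution: the right factor of $\ast$ becomes the outer function of the composition.

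For the inductive step, assume $\tilde\Upsilon(n-1)(g\alpha) = g_{(1)} \circ \cdots \circ g_{(n-1)}(g)$. Applying the above translation to the recursion $\tilde\Upsilon(n) = \tilde\Upsilon^-_n \ast \tilde\Upsilon(n-1)$ (with $\phi_1 = \tilde\Upsilon^-_n$ and $\phi_2 = \tilde\Upsilon(n-1)$) yields
\[
	\tilde\Upsilon(n)(g\alpha) \;=\; \tilde\Upsilon(n-1)(g\alpha)\Big|_{g \,\mapsto\, \tilde\Upsilon^-_n(g\alpha)} \;=\; \bigl(g_{(1)} \circ \cdots \circ g_{(n-1)}\bigr)\bigl(g_{(n)}(g)\bigr) \;=\; g_{(1)} \circ \cdots \circ g_{(n)}(g),
\]
closing the induction. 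All compositions are well-defined in $A[[g]]$ because each $g_{(l)}(g) = g + O(g^{l+1})$ lies in $g + g^2 A[[g]]$; indeed $\Upsilon^-_l(\alpha_n)=0$ for $0 < n < l$ since the generating infinitesimal character $-P_- \circ \tilde\varphi^+_{l-1} \circ \pi_l$ is concentrated in degree $l$.

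I do not foresee a substantive obstacle in this proof. The whole content lies in the corollary to Lemma~\ref{lem:FaBlemma}, which provides the convolution--composition dictionary; the remaining work is the bookkeeping of the order reversal between $\ast$ and $\circ$, and the harmless observation that this dictionary is insensitive to whether characters are valued in $\RR$ or in $A[[g]]$.
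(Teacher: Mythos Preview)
Your proof is correct and is essentially the paper's own argument: both proceed by induction, using the Fa\`a di Bruno coproduct formula for $\alpha$ to translate convolution of characters into composition of the associated power series. The only cosmetic difference is that the paper peels off the rightmost factor $\tilde\Upsilon^-_1$ (expanding the coproduct explicitly), while you peel off the leftmost $\tilde\Upsilon^-_n$ via the recursion $\tilde\Upsilon(n)=\tilde\Upsilon^-_n\ast\tilde\Upsilon(n-1)$ and invoke the corollary to Lemma~\ref{lem:FaBlemma} abstractly.
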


\begin{proof} 
The proof follows by induction together with the Fa\`a di Bruno formula.
\allowdisplaybreaks{
\begin{eqnarray*}
	\tilde\Upsilon(2)(g\alpha)(g)&=& \tilde\Upsilon^-_{2} \ast \tilde\Upsilon_{1}^-(g\alpha)(g)\\
				&=&\sum\limits_{n \ge 0}  (\tilde\Upsilon^-_{2}(\alpha)(g))^{n+1}a^{(1)}_ng^{n+1}\\	
				&=&\sum\limits_{n \ge 0}  a^{(1)}_n (\tilde\Upsilon^-_{2}(g\alpha)(g))^{n+1}
				  =   g_{(1)}\circ g_{(2)}(g).
\end{eqnarray*}}
Similarly:
\allowdisplaybreaks{
\begin{eqnarray*}
	\tilde\Upsilon(m)(g\alpha)(g)
	&=& \tilde\Upsilon^-_{m} \ast  \cdots \ast \tilde\Upsilon_{2}^- \ast \tilde\Upsilon_{1}^-(g\alpha)(g)\\
	&=&\sum\limits_{n \ge 0}  (\tilde\Upsilon^-_{m} \ast  \cdots \ast \tilde\Upsilon_{2}^- (\alpha)(g))^{n+1} a^{(1)}_ng^{n+1}\\	
	&=&\sum\limits_{n \ge 0} a_n^{(1)} (g_{(2)}\circ \cdots \circ g_{(m)})(g))^{n+1} 
	   =   g_{(1)}\circ (g_{(2)}\circ \cdots \circ g_{(m)})(g).
\end{eqnarray*}}
\end{proof}

\begin{prop} (exponential method and composition) \label{prop:ExpFaa2}
With the assumption of the foregoing proposition we find that:
\allowdisplaybreaks{
\begin{eqnarray*}
	\tilde\varphi^+_n(f)(g) &=&  \tilde\Upsilon(n)(f)(g)\cdot \tilde\varphi(f)\circ g_{(1)}\circ \cdots \circ g_{(n)} (g)
\end{eqnarray*}}
\end{prop}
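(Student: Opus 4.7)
My plan is to expand the product $\tilde\Upsilon(n)\ast\tilde\varphi$ directly using the coproduct of $f$, factor off $\tilde\Upsilon(n)(f)$ with the character property, and then recognize the remaining sum as a composition of formal power series by invoking Proposition~\ref{prop:ExpFaa1}.

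\textbf{Step 1 (unfold the convolution).} By construction, $\tilde\varphi_n^+=\tilde\Upsilon(n)\ast\tilde\varphi$. Since by hypothesis $\Delta(f)=\sum_{k\ge 0} f\alpha^k\otimes f_k$, the convolution gives
$$
\tilde\varphi^+_n(f)(g)\;=\;\sum_{k\ge 0}\tilde\Upsilon(n)(f\alpha^k)\cdot \tilde\varphi(f_k).
$$
Because $\tilde\Upsilon(n)$ is an $A[[g]]$-valued character, $\tilde\Upsilon(n)(f\alpha^k)=\tilde\Upsilon(n)(f)\cdot\tilde\Upsilon(n)(\alpha)^k$. Moreover, since $f_k\in H_k$, the graded extension of $\varphi$ yields $\tilde\varphi(f_k)=\varphi(f_k)g^k$. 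Thus
$$
\tilde\varphi^+_n(f)(g)\;=\;\tilde\Upsilon(n)(f)(g)\cdot\sum_{k\ge 0}\varphi(f_k)\bigl(g\cdot\tilde\Upsilon(n)(\alpha)\bigr)^k.
$$

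\textbf{Step 2 (recognize the composition).} By the same graded-extension argument, $\tilde\Upsilon(n)(g\alpha)(g)=g\cdot\tilde\Upsilon(n)(\alpha)$, so the variable $h(g):=g\cdot\tilde\Upsilon(n)(\alpha)$ is exactly the bare coupling expression appearing in Proposition~\ref{prop:ExpFaa1}. That proposition identifies it with the iterated composition
$$
h(g)\;=\;\tilde\Upsilon(n)(g\alpha)(g)\;=\;g_{(1)}\circ\cdots\circ g_{(n)}(g).
$$
On the other hand, by definition of $\tilde\varphi$, the formal series $\tilde\varphi(f)$ in the variable $g$ is $1+\sum_{k>0}\varphi(f_k)g^k$, so $\sum_{k\ge 0}\varphi(f_k)h(g)^k=\tilde\varphi(f)\bigl(h(g)\bigr)$. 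Substituting gives the claim:
$$
\tilde\varphi^+_n(f)(g)\;=\;\tilde\Upsilon(n)(f)(g)\cdot\tilde\varphi(f)\circ g_{(1)}\circ\cdots\circ g_{(n)}(g).
$$

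\textbf{What to be careful about.} There is no real obstacle here; the argument is essentially a bookkeeping of degrees together with one application of the character property. The only delicate point is keeping straight the two roles of $g$: as the formal variable of the series $\tilde\Upsilon(n)(\alpha)(g)$ produced by evaluating the $A[[g]]$-valued character on the grouplike-type element $\alpha$, and as the argument at which one composes these series. This is precisely what Proposition~\ref{prop:ExpFaa1} packages cleanly, so once that identification is in hand the proof reduces to the computation above.
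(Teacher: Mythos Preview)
Your proof is correct and follows essentially the same route as the paper: expand $\tilde\Upsilon(n)\ast\tilde\varphi$ on $f$ via $\Delta(f)=\sum_k f\alpha^k\otimes f_k$, use the character property to factor out $\tilde\Upsilon(n)(f)$, and then identify $g\,\tilde\Upsilon(n)(\alpha)=\tilde\Upsilon(n)(g\alpha)(g)$ with the iterated composition $g_{(1)}\circ\cdots\circ g_{(n)}(g)$ via Proposition~\ref{prop:ExpFaa1}. The paper's proof is the same computation, just written more tersely.
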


\begin{proof} 
The proof follows from the coproduct $\Delta(f)=\sum_n f \alpha^n\otimes f_n$ by a simple calculation.
\allowdisplaybreaks{
\begin{eqnarray*}
	\tilde\varphi^+_n(f)(g) &=& (\tilde\Upsilon(n) * \tilde\varphi)(f)(g)\\
		&=& \sum_{m\ge 0} \tilde\Upsilon(n)(f \alpha^m)(g)  \varphi(f_m)g^m\\
		&=&  \tilde\Upsilon(n)(f)(g) \sum_{m\ge 0}\varphi(f_m)(\tilde\Upsilon(n)(g\alpha)(g))^m
\end{eqnarray*}}
from which we derive the above formula using Proposition \ref{prop:ExpFaa1}.
\end{proof}

The reader may recognize in this formula a familiar structure. This identity is indeed an elaboration on the Dyson formula -we shall return on this point later on.


\subsection{The BWH-decomposition as a special case}
\label{ssect:BWHexp}

The decomposition $\Upsilon^-_{\infty}\ast\varphi= \varphi^+$ in Proposition~\ref{prop:Exp2} may be interpreted as a generalized BWH decomposition. Indeed, under the Rota--Baxter assumption, that is if $P_-$ is a proper idempotent Rota--Baxter map (i.e. if the image of $P_-$ is a subalgebra, denoted $A_-$), $G(A)=G(A_-)\ast G(A_+)$ and the decomposition of a character $\varphi$ into the convolution product of an element in $G(A_-)$ and in $G(A_+)$ is necessarily unique (see \cite{EFGP2,EFMP} to which we refer for details on the Bogoliubov recursion in the context of Rota--Baxter algebras). In particular, $\Upsilon_\infty^-$ identifies with the counterterm $\varphi_-$ of the BWH decomposition.

Let us detail briefly this link with the BPHZ method under the Rota--Baxter assumption for the projection maps $P_-$ and $P_+$.

Proposition \ref{prop:Exp2} in the foregoing subsection leads to the following important remark (that holds independently of the RB assumption). Observe that by construction it is clear that for $y \in H_{k}$, $k<n+1$:
$$
	\Upsilon(n+1)(y)=\Upsilon(k)(y).
$$ 
Using $\varphi^+_{k-1}=\Upsilon(k-1)*\varphi$ we see with $y \in H_{k}$ that:
\allowdisplaybreaks{
 \begin{eqnarray}
	\Upsilon(k)(y) &=& \Upsilon^-_{k}  \ast \cdots \ast \Upsilon_{1}^-(y) 	\nonumber\\
	&=& -P_-(\varphi^+_{k-1}(y)) + \Upsilon(k-1)(y) 					\nonumber\\
	&=& -P_-(\varphi(y)) - P_-( \Upsilon(k-1)(y)  )- P_-( \Upsilon(k-1)(y_{(1)}') \varphi(y_{(2)}')  ) + \Upsilon(k-1)(y)\nonumber\\
	&=& -P_-(\varphi(y) + \Upsilon(k-1)(y'_{(1)}) \varphi(y'_{(2)})  ) + P_+( \Upsilon(k-1)(y)  ) \nonumber\\
	&=& -P_-(\Upsilon(k-1)*(\varphi-e) (y))+ P_+( \Upsilon(k-1)(y)  ).\label{BWHrb}
\end{eqnarray}}	

Now, note that for all $n > 0$, the RB property implies that $\Upsilon(n)(y)$ is in $A_-$ for $y \in H^+$.  Hence, going to equation (\ref{BWHrb}) we see that $ P_+( \Upsilon(k-1)(y)  ) = 0$.

\begin{prop}\label{prop:ZassenBogo}
For $n>0$ the characters $\varphi^+_n$ and $\Upsilon(n)$ restricted to $H^n:=\bigoplus_{i=0}^nH_i$ solve Bogoliubov's renormalization recursion.   
\end{prop}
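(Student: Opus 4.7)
The plan is to show that equation~(\ref{BWHrb}), which was derived in the excerpt without using any Rota--Baxter hypothesis, collapses to Bogoliubov's formulae~(\ref{eq:BogoliubovFormulae}) precisely when the image of $P_-$ is a subalgebra. In other words, the whole point is to verify that the Rota--Baxter assumption makes the correction term $P_+(\Upsilon(k-1)(y))$ appearing in~(\ref{BWHrb}) vanish, so that the exponential method reproduces the BWH decomposition degree by degree.

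First I would record a stabilization observation: since each $\Upsilon_l^-=\exp^*(-\mu_l^{\varphi_{l-1}^+})$ is the exponential of an infinitesimal character concentrated in degree $l$, one has $\Upsilon_l^-|_{H_j}=\epsilon|_{H_j}$ for all $0\le j<l$. Expanding $\Upsilon(n)=\Upsilon_n^-\ast\cdots\ast\Upsilon_1^-$ against the iterated coproduct and using this vanishing shows that
\[
\Upsilon(n)|_{H_k}=\Upsilon(k)|_{H_k}\qquad\text{for every }k\le n.
\]
Consequently, in~(\ref{BWHrb}) one may freely replace $\Upsilon(k-1)$ by $\Upsilon(n)$: the only arguments of $\Upsilon(k-1)$ that contribute to $\Upsilon(k-1)\ast(\varphi-e)$ on $y\in H_k$ have degree strictly less than $k$, where $\Upsilon(k-1)$ and $\Upsilon(n)$ already agree.

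The crux is the next step. I claim that under the Rota--Baxter hypothesis, so that $A_-:=P_-(A)$ is a subalgebra, the counterterm $\Upsilon(n)$ sends $H^+$ into $A_-$. Indeed, each infinitesimal character $\mu_l^{\varphi_{l-1}^+}=P_-\circ\varphi_{l-1}^+\circ\pi_l$ takes values in $A_-$ by construction, and since $A_-$ is a subalgebra, every iterated convolution power of it evaluated on $H^+$ remains in $A_-$; hence $\Upsilon_l^-(H^+)\subset A_-$. Now if $\chi_1,\chi_2$ are characters with $\chi_i(H^+)\subset A_-$, then for $y\in H^+$,
\[
(\chi_1\ast\chi_2)(y)=\chi_1(y)+\chi_2(y)+\sum\chi_1(y'_{(1)})\chi_2(y'_{(2)}),
\]
in which both reduced-coproduct factors again lie in $H^+$, so the whole expression is in $A_-$. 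Iterating $n$ times gives $\Upsilon(n)(H^+)\subset A_-$, and in particular $P_+(\Upsilon(k-1)(y))=0$ for every $y\in H_k\subset H^+$.

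Inserting these two observations into~(\ref{BWHrb}) yields, for $y\in H_k$ with $1\le k\le n$,
\[
\Upsilon(n)(y)=-P_-\bigl(\Upsilon(n)\ast(\varphi-e)(y)\bigr),
\]
which, combined with $\Upsilon(n)(\un)=1$, is exactly the ``$-$'' half of~(\ref{eq:BogoliubovFormulae}) restricted to $H^n$. For the ``$+$'' half I would use $\varphi_n^+=\Upsilon(n)\ast\varphi$ together with $\mrm{id}=P_++P_-$: on $y\in H^+$ one has $\varphi_n^+(y)=\Upsilon(n)(y)+\Upsilon(n)\ast(\varphi-e)(y)$, and substituting the identity just obtained gives $\varphi_n^+(y)=P_+(\Upsilon(n)\ast(\varphi-e)(y))$, which is the second Bogoliubov identity on $H^n$. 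The main obstacle is really the middle paragraph: the entire content of the proposition is that the Rota--Baxter identity forces the output of the exponential algorithm to land inside the subalgebra $A_-$ on $H^+$, thereby killing the extra $P_+\Upsilon(k-1)$ correction in~(\ref{BWHrb}) -- which is exactly what makes the exponential method strictly more general than the BWH decomposition in the absence of a Rota--Baxter structure. Everything else is formal bookkeeping with the coproduct.
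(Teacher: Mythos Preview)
Your proof is correct and follows essentially the same approach as the paper: both rely on~(\ref{BWHrb}) and the observation that under the Rota--Baxter hypothesis $\Upsilon(n)(H^+)\subset A_-$, which kills the $P_+(\Upsilon(k-1)(y))$ term and yields Bogoliubov's ``$-$'' recursion, after which the ``$+$'' recursion follows formally from $\varphi_n^+=\Upsilon(n)\ast\varphi$ and $P_++P_-=\mrm{id}$. Your argument is in fact more explicit than the paper's, since you actually prove the inclusion $\Upsilon(n)(H^+)\subset A_-$ (the paper merely asserts it in the paragraph preceding the proposition), and your derivation of the ``$+$'' formula by direct substitution of the ``$-$'' identity is a minor but equally valid variant of the paper's computation via $P_+\circ(\varphi_n^+-\Upsilon(n))$.
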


\begin{proof} 
Let $x \in H^n$. From our previous discussion:
\allowdisplaybreaks{
\begin{eqnarray*}
	e(x) - P_- \circ (\Upsilon(n) * (\varphi - e))(x) &=& \Upsilon(n)(x).
\end{eqnarray*}}
Similarly:
\allowdisplaybreaks{
\begin{eqnarray*} 
	e(x) + P_+ \circ (\Upsilon(n) * (\varphi - e))(x) &=& e(x) + P_+ \circ (\Upsilon(n) * \varphi - \Upsilon(n))(x)\\
									    &=& e(x) + P_+ \circ (\varphi^+_n - \Upsilon(n))(x)\\  
									    &=& \varphi^+_n(x).
\end{eqnarray*}}

When going to  the last line we used $P_+ \circ P_-= P_- \circ P_+= 0$ as well as the Rota--Baxter property of $P_-$ and $P_+$. This implies that, on $H^n$, $\varphi^+_n= e + P_+\circ(\Upsilon(n) * (\varphi - e))$ and $\Upsilon(n)=e - P_-\circ(\Upsilon(n) * (\varphi - e))$ which are Bogoliubov's renormalization equations for the counterterm and the renormalized character, respectively, see e.g. \cite{EFGP2,EFMP}.
\end{proof}


\subsection{On counterterms in the BWH decomposition}
\label{ssect:Bogo}

Recall briefly how these results translate in the language of renormalization in perturbative QFT. This section also introduces several notations that will be useful later on. The reader is referred to the textbooks \cite{Collins,IZ1980} and the articles \cite{CKII,CKIII} for more details.

As often in the literature, the massless $\phi^4$ Lagrangian $L=L(\partial_\mu \phi,\phi, g)$ in four space-time dimensions shall serve as a paradigm:
\begin{equation}
\label{Lphi4}
	L  := \frac{1}{2} \partial_\mu \phi \partial^\mu \phi - \frac{g}{4!}\phi^4.
\end{equation} 
This is certainly a too simple Lagrangian to account for all the combinatorial subtleties of perturbative QFT, but its basic properties are quite enough for our present purpose. The quadratic part is called the free Lagrangian, denoted by $L_0$. The rest is called the interaction part, and is denoted by $L_i$. The parameter $g$ appearing in $L=L_0+L_i$ is the so-called renormalized, that is, finite coupling constant. 

Perturbation theory is most effectively expressed using Feynman graphs. Recall that from the above Lagrangian we can derive Feynman rules. Then any Feynman graph~$\Gamma$ corresponds by these Feynman rules to a Feynman amplitude. By $|\Gamma|$ we denote the number of loops in the diagram. Recall that in any given theory exists a rigid relation between the numbers of loops and vertices, for each given $m$-point function. In  $\phi^4$ theory, for graphs associated to the $2$-point function the number of vertices equals the number of loops. For graphs associated to the $4$-point function the number of vertices is equal to the number of loops plus one. A Feynman amplitude consists of the Feynman integral, i.e. a multiple $d(=4)$-dimensional momentum space integral:
\begin{equation}
	\Gamma \mapsto \bigg[\int\prod_{l=1}^{|\Gamma|}\,d^dk_l \bigg]I_\Gamma(p,k),
\label{eq:stone-of-contention}
\end{equation}
multiplied by a proper power of the coupling constant, i.e.   $g^{|\Gamma|+1}$ for $4$-point graphs and $g^{|\Gamma|}$ for $2$-point graphs. Here, $k=(k_1,\ldots,k_{|\Gamma|})$ are the $|\Gamma|$ independent internal (loop) momenta, that is, each independent loop yields one integration, and $p=(p_1,\ldots,p_N)$, with $\sum_{k=1}^N p_k=0$, denotes the~$N$ external momenta. Feynman integrals are most often divergent and require to be properly regularized and renormalized to acquire physical meaning. A regularization method is a prescription that parameterizes the divergencies appearing in Feynman amplitudes upon introducing non-physical parameters, denoted $\varepsilon$, thereby rendering them formally finite. Let us write $g\tilde\psi(\Gamma;\varepsilon)=g^{|\Gamma|+1}{\psi}(\Gamma;\varepsilon)$ for the regularized Feynman amplitude (for example in DR; the notation $\tilde\psi$ is introduced for later use).

Of pivotal interest are Green's functions, in particular 1PI $n$-point (regularized) Green's functions, denoted $G^{(n)}(g,\varepsilon):=G^{(n)}(p_1,\ldots,p_n;g,\varepsilon)$. In the following we will ignore the external momenta and omit the regularization parameter. Recall that for the renormalization of the Lagrangian (\ref{Lphi4}), the $4$- and $2$-legs 1PI Feynman graphs, respectively the corresponding amplitudes, beyond tree level are of particular interest.  As guiding examples we use therefore from now on the regularized momentum space 1PI $4$- and $2$-point Green's function. These are power series in the coupling $g$ with Feynman amplitudes as coefficients:
$$
	G^{(4)}(g)= \tilde\psi(gz_g) \quad {\rm{and}} \quad G^{(2)}(g)= \tilde\psi(z_\phi),
$$ 
where $z_g$ and $z_\phi$ stand for the formal coupling constant $z$-factors in the corresponding Hopf algebra of Feynman graphs $H$:
\begin{equation}
\label{z}
	z_g = \un +  \sum_{k>0} \Gamma^{(4)}_k  
	\quad {\rm{and}} \quad  
	z_\phi = \un - \sum_{k>1} \Gamma^{(2)}_k.
\end{equation}
Here, $\un$ is the empty graph in $H$ and: 
$$
	\Gamma^{(4)}_k := \sum_{m=1}^{N^{(4)}_k} \frac{\Gamma^{(4)}_{k,m}}
	                                                                           {sym(\Gamma^{(4)}_{k,m})} 
	\quad {\rm{and}} \quad 
	\Gamma^{(2)}_k :=\sum_{n=1}^{N^{(2)}_k} \frac{\Gamma^{(2)}_{k,n}}{sym(\Gamma^{(2)}_{k,n})}
$$ 
denote the sums of the $N^{(4)}_k$ 1PI 4-point and $N^{(2)}_k$ 2-point graphs of loop order $k$, divided by their symmetry factors, respectively. To deal with the polynomial dependency of the Green's functions on the coupling constant $g$, we write:
$$
	G^{(4)}(g)=g + \sum\limits_{k=1}^{\infty} g^{k+1}G^{(4)}_k
	\quad {\rm{and}} \quad  
	G^{(2)}(g)=1 - \sum\limits_{k=1}^{\infty} g^{k}G^{(2)}_k,
$$ 
so that $G^{(r)}_k=\psi (\Gamma^{(r)}_k)$, for $r=2,4$. 

Hence, as perturbative 1PI Green's functions are power series with individual -UV divergent- 1PI Feynman amplitudes as coefficients, one way to render them finite is to renormalize graph by graph. This is the purpose of the Bogoliubov recursion, which, in the context of DR+MS, was nicely encoded in the group-theoretical language by Connes and Kreimer \cite{CKII}. Indeed, let $H$ be the graded connected commutative Hopf algebra of 1PI Feynman graphs associated to the Lagrangian (\ref{Lphi4}) and let us choose the RB algebra of Laurent series $A=\CC[\varepsilon^{-1},\varepsilon]]$ as a target algebra for the regularized amplitudes (the natural choice in DR). Then, the correspondence $\Gamma\mapsto\tilde\psi(\Gamma;\varepsilon)$ extends uniquely to a character on $H$. That is, the regularized Feynman rules, $\tilde\psi$, can be interpreted as an element  of $G(A[[g]])$. 

Recall now that in the case of DR the underlying RB structure, i.e. the MS scheme, implies the unique BWH decomposition $\tilde\psi=\tilde\psi_-^{-1}\ast\tilde\psi_+$. This allows to recover Bogoliubov's classical counterterm map $C$ and the renormalized Feynman rules map $R$. Indeed, for an arbitrary 1PI graph $\Gamma \in H$, one gets:
$$
	C(\Gamma ) =\tilde\psi_-(\Gamma) 
		\quad {\rm{and}} \quad  
	 R(\Gamma )=\tilde\psi_+(\Gamma ).
$$
The linearity of $R$ then leads to renormalized 1PI Green's functions: $G_R^{(4)}(g)=R(gz_g)$, $G_R^{(2)}(g)=R(z_\phi)$. We refer to \cite{CKII} for further details.


\subsection{The Lagrangian picture}
\label{ssect:LagrangePic}

The counterterms $C(\Gamma )$ figure in the renormalization of the Lagrangian $L$. Indeed, for a multiplicative renormalizable QFT, it can be shown that the BPHZ method is equivalent to the method of additive, and hence multiplicative renormalization. Therefore, let us remind ourselves briefly of the additive method, characterized by adding order-by-order counterterms to the Lagrangian $L$. Eventually, this amounts to multiplying each term in the Lagrangian by particular renormalization factors. Details can be found in standard textbooks on perturbative QFT, such as \cite{Collins,IZ1980}. 

In general the additive renormalization prescription is defined as follows. The Lagrangian $L$ is modified by adding the so-called counterterm Lagrangian, $L_{ct}$, resulting in the renormalized Lagrangian:
$$
	L_{ren} :=L + L_{ct},
$$ 
where $L_{ct}:=\sum_{s>0} L_{ct}^{(s)}$ is defined by:
\begin{equation}
\label{renormalized0}
	L_{ct}:= C_1(g)\frac{1}{2} \partial_\mu \phi \partial^\mu \phi 
					- C_2(g)\frac{g}{4!}\phi^4,
\end{equation}
with $C_n(g):=\sum_{s>0}g^sC_n^{(s)}$, $n=1,2$ being power series in $g$. The $C_n^{(s)}$, $n=1,2$, $s>0$ are functions of the regularization parameter $\varepsilon$ to be defined iteratively as follows. 

To obtain the $1$-loop counterterm $L^{(1)}_{ct}$ one starts with $L=L_0+L_i$, computes the propagators and vertices, and generates all one-loop diagrams, that is, graphs of order $g^2$. Among those one isolates the $UV$ divergent 1PI Feynman diagrams and chooses the $1$-loop counterterm part $L_{ct}^{(1)}$, that is, $C_n^{(1)}$, $n=1,2$, so as to cancel these divergences. 

Now, use the $1$-loop renormalized Lagrangian $L_{ren} ^{(1)}:=L + L_{ct}^{(1)} + \sum_{s>1} L_{ct}^{(s)}$ to generate all graphs up to $2$-loops, that is, all graphs of order $g^3$. Note that this includes for instance graphs with one loop where one of the vertices is multiplied by $g^2C_2^{(1)}$ and the other one by $g$, leading to an order $g^3$ contribution. Again, as before, isolate the $UV$ divergent 1PI ones and choose the $2$-loop counterterm part $L_{ct}^{(2)}$, which is now of order $g^3$, again so as to cancel these divergencies. Proceed with the $2$-loop renormalized Lagrangian $L_{ren} ^{(2)}:=L + L_{ct}^{(1)}+L_{ct}^{(2)} + \sum_{s>2} L_{ct}^{(s)}$, and so on. The 2-point graphs contribute to the wave function counterterm, whereas 4-point graphs contribute to the coupling constant counterterm (see e.g. \cite[Chap. 5]{Collins}).

Note that after $j$ steps in the iterative prescription one obtains the resulting $j$th-loop renormalized Lagrangian: 
\begin{equation}
\label{renormalized1}
	L^{(j)}_{ren} :=  L_0+L_i + L^{(1)}_{ct} + \cdots + L^{(j)}_{ct}+ \sum_{s>j} L_{ct}^{(s)}
\end{equation}
with counterterms $C_n^{(s)}$, $n=1,2$ fixed  up to order $j$, such that it gives finite expressions up to loop order $j$. The part  $L_{ct}^{(s)}$, $s>j$, remains undetermined. In fact, later we will see that, in our terminology, some associated Feynman rules are $j$-regular.

The multiplicative renormalizability of $L$ implies that we may absorb the counterterms into the coupling constant and wave function $Z$-factors:
$$
	Z_g:=1+C_2(g),\ \ Z_\phi:=1+C_1(g),
$$
where $C_n(g)=\sum_{s=1} g^s C_n^{(s)}$, $n=1,2$.  We get:
\begin{equation}
\label{renormalized2a}
	L^{(j)}_{ren}  = \frac{1}{2}  Z_\phi \partial_\mu \phi \partial^\mu \phi - \frac{1}{4!}gZ_g\phi^4.
\end{equation}
As it turns out, Bogoliubov's counterterm map seen as $C \in G(A[[g]])$ gives:
$$
	Z_g(g)=C(z_g)(g) 	
	\quad {\rm{and}} \quad 
	Z_\phi(g)=C(z_\phi)(g),
$$
where we made the $g$ dependence explicit. 

Now we define the bare, or unrenormalized, field $\phi_{(0)}:= \sqrt{Z_{\phi}} \phi$ as well as the bare coupling constant:  
$$
	g^B(g):=\frac{gZ_g(g)}{{Z^2_{\phi}(g)}},
$$ 
and as $C \in G(A[[g]])$:
$$
	g^B(g)=g C(z_B)(g)
$$
where $ z_B:=z_g/z_{\phi}^2 \in H$ is the formal bare coupling. Up to the rescaling of the wave functions, the locality of the counterterms allows for the following renormalized Lagrangian:
\begin{equation}
\label{renormalized2b}
	L_{ren} =\frac{1}{2}  \partial_\mu \phi_{(0)} \partial^\mu \phi_{(0)} 
					- \frac{1}{4!}g^B(g)\phi^4_{(0)}.
\end{equation}


\subsection{Dyson's formula revisited}
\label{ssect:Dysonformula}

Let us denote once again by $H$ and $F$ the Hopf algebra of 1PI Feynman graphs of the massless $\phi^4$ theory in four space-time dimensions and the Fa\`a Di Bruno Hopf algebra, respectively. The purpose of the present section is to show how Dyson's formula, relating renormalized and (regularized) bare Green's functions, allows for a refined interpretation of the exponential method for constructing regular characters in the context of renormalization. 

We write $R$ and $C$  for the regularized renormalized Feynman rules and counterterm character, respectively.  
Recall the universal bare coupling constant: 
$$
	z_B:= z_g z_\phi^{-2}.
$$ 
It can be expanded as a formal series  in $H$:
\begin{equation}
\label{univcoup}
	z_B = \un + \sum_{k>0} \Gamma_k \in H,
\end{equation}
where $\Gamma_k \in H_k$ is a homogeneous polynomial of loop order $k$ in 1PI 2- and 4-point graphs with a linear part $\Gamma_k^{(4)} + 2 \Gamma_k^{(2)}$. Notice that, as $H$ is a polynomial algebra over Feynman graphs and since the family of the $\Gamma_k^{(4)}$ and of the $\Gamma_k^{(2)}$ are algebraically independent in $H$, also the families of $\Gamma_k$ and $\Gamma_k^{(r)}$, $r=2,4$, are algebraically independent in $H$.

Coming back to 1PI Green's functions. Dyson back then in the 1940s \cite{Dyson49} showed --in the context of QED, but the result holds in general \cite[Chap.8]{IZ1980}--  that the bare and renormalized 1PI $n$-point Green's functions satisfy the following simple identity:
\begin{equation}
\label{dyson}
	G_R^{(n)}(g) =Z^{n/2}_\phi G^{(n)}(g^B).
\end{equation}

Recall that the renormalized as well as the bare 2- and 4-point Green's functions and the $Z$-factors, $Z_\phi$ and $Z_g$, are obtained by applying respectively the renormalized Feynman rules map $R$,  the Feynman rules $\tilde\psi$ and the counterterm $C$ to the formal $z$-factors introduced in (\ref{z}), respectively. When translated into the language of Hopf algebras, the Dyson equation reads, say, in the case of the $4$-point function:
$$	
	G_R^{(4)}(g)=R(gz_g) = C(z^{2}_\phi) \sum_{j=0}^{\infty}C(z_B)^{j+1}\tilde\psi(g\Gamma_j^{(4)}) 
						       = \sum_{j=0}^{\infty}C(z_B)^{j}C(z_g) \tilde\psi(g\Gamma_j^{(4)}). 
$$
This can be rewritten:
\begin{eqnarray}
\label{E1}
	R(z_g)&=&m_A(C\otimes \tilde\psi)\sum_{j=0}^{\infty}z_B^{j}z_g\otimes\Gamma_j^{(4)}, 
\end{eqnarray}
where we recognize the convolution expression $R=C\ast \tilde\psi$ of the BWH decomposition, with:
\begin{eqnarray}
\label{E2}
	\Delta(z_g)=\sum_{k \ge 0} z_B^k z_g \otimes \Gamma^{(4)}_k.
\end{eqnarray}
Similarly, the study of the 2-point function yields:
$$
	\Delta(z_\phi)= z_\phi \otimes \un -  \sum_{k > 0} z_B^k z_\phi \otimes \Gamma^{(2)}_k.
$$

The equivalence between the two formulas (\ref{E1}) and (\ref{E2}) follow from the observation that the BWH decomposition of characters holds for arbitrary counterterms and renormalized characters, ${\tilde\psi}_-$ and ${\tilde\psi}_+$, respectively. Choosing, e.g. ${\tilde\psi}_-=C$ and ${\tilde\psi}_+=R$ in such a way that their values on Feynman diagrams form a family of algebraically independent elements (over the rationals) in $\CC$ shows that (\ref{E1}) implies (\ref{E2}) (the converse being obvious). Notice that the coproduct formulas can also be obtained directly from the combinatorics of Feynman graphs. We refer to \cite{Bel,CKIII,vanSu1,vanSu2} for complementary approaches and a self-contained study of coproduct formulas for the various formal $z$-factors. 

Now, Lemma \ref{lem:FaBlemma} implies immediately the Fa\`a di Bruno formula for $z_B$:

\begin{prop} \label{thething2}
$$	
	\Delta(z_B)=\sum_{k \ge 0} z_B^{k+1} \otimes \Gamma_k
$$
\end{prop}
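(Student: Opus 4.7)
The plan is to apply Lemma \ref{lem:FaBlemma} directly, with the dictionary $f \leftrightarrow z_g$, $f_k \leftrightarrow \Gamma_k^{(4)}$, $\alpha \leftrightarrow z_B$ and $\alpha_k \leftrightarrow \Gamma_k$ (for $k\ge 1$). The Hopf algebra $H$ of 1PI Feynman graphs is polynomial and graded by loop order, so the completion $\prod_n H_n$ provides the ambient algebra of ``formal power series'' required by the lemma.

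First, I would check the coproduct hypothesis $\Delta(f)=\sum_n f\alpha^n\otimes f_n$ of the lemma. This is precisely formula \eqref{E2}, $\Delta(z_g)=\sum_{k\ge 0}z_B^k z_g\otimes \Gamma_k^{(4)}$, which has already been obtained from Dyson's identity for the $4$-point Green's function (and which, as observed in the paper, can equivalently be derived directly from the combinatorics of graph insertion). So this hypothesis holds for free.

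Second, I would verify the algebraic independence requirement, namely that the $\Gamma_k$ are algebraically independent and also algebraically independent from the $\Gamma_k^{(4)}$. The paper has already recorded that the family $\{\Gamma_k^{(4)},\Gamma_k^{(2)}\}_{k\ge 1}$ is algebraically independent in $H$ (these are distinct weighted sums of distinct 1PI graphs in a polynomial Hopf algebra). From \eqref{univcoup} and the explicit linear part $\Gamma_k^{(4)}+2\Gamma_k^{(2)}$ of $\Gamma_k$, the expression of $\Gamma_k$ in the generators $\Gamma_j^{(r)}$ is
$$
\Gamma_k \;=\; \Gamma_k^{(4)}\;+\;2\,\Gamma_k^{(2)}\;+\;P_k\bigl(\Gamma_j^{(r)}:\, j<k,\ r=2,4\bigr),
$$
so the change of generators $\{\Gamma_k^{(4)},\Gamma_k^{(2)}\}_{k\ge 1}\leftrightarrow \{\Gamma_k^{(4)},\Gamma_k\}_{k\ge 1}$ is triangular with respect to the loop-order filtration and hence invertible. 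Algebraic independence of the source family therefore transfers to the target family, which is exactly what the lemma requires.

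With both hypotheses in place, Lemma \ref{lem:FaBlemma} yields $\Delta(\alpha)=\sum_{k\ge 0}\alpha^{k+1}\otimes\alpha_k$, i.e., $\Delta(z_B)=\sum_{k\ge 0}z_B^{k+1}\otimes \Gamma_k$, as claimed. The only step that is not completely mechanical is the algebraic independence check; everything else is just a matter of matching notations. I expect the triangularity argument to be the main conceptual point, since it is what legitimates treating $(z_g, z_B)$ as a pair of free variables in the sense demanded by the lemma, even though $z_B$ is actually built from $z_g$ and $z_\phi$.
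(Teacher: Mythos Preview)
Your proposal is correct and follows exactly the paper's approach: the paper simply states that Lemma~\ref{lem:FaBlemma} implies the result immediately, having already recorded formula~\eqref{E2} and the algebraic independence of the families $\{\Gamma_k\}$ and $\{\Gamma_k^{(4)}\}$. Your triangularity argument for the algebraic independence is a welcome elaboration of a point the paper asserts but does not argue.
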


\begin{cor}
There exists a natural Hopf algebra homomorphism $\Phi$ from $F$ to $H$: 
\begin{equation}
\label{FdB}
	a_n \mapsto \Phi(a_n) :=  \Gamma_n.
\end{equation}
\end{cor}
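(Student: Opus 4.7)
The plan is to show that $\Phi$ is well-defined as an algebra map and then verify coproduct compatibility, at which point the Hopf-algebra structure takes care of itself since both $F$ and $H$ are graded connected bialgebras.

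First I would define $\Phi$ on the generators by $\Phi(a_n) := \Gamma_n$ and extend multiplicatively. This is legitimate because the Fa\`a di Bruno Hopf algebra $F$ is, as an algebra, the free commutative polynomial algebra on the generators $\{a_n\}_{n>0}$; hence any assignment of the generators to elements of a commutative algebra extends uniquely to a unital algebra homomorphism. The grading is preserved because $\Gamma_n \in H_n$ and $a_n$ is of degree $n$. Compatibility with the counit is automatic since both $\epsilon_F(a_n)$ and $\epsilon_H(\Gamma_n)$ vanish for $n>0$.

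The core step is to verify $(\Phi \otimes \Phi) \circ \Delta_F = \Delta_H \circ \Phi$ on each generator $a_n$. On the one hand, the Fa\`a di Bruno coproduct gives
$$
(\Phi \otimes \Phi)(\Delta_F(a_n)) = \sum_{k=0}^n \sum_{\substack{l_0+\cdots+l_k=n-k \\ l_i \ge 0}} \Gamma_{l_0}\cdots\Gamma_{l_k} \otimes \Gamma_k.
$$
On the other hand, by Proposition \ref{thething2}, $\Delta(z_B) = \sum_{k \ge 0} z_B^{k+1} \otimes \Gamma_k$, and extracting the homogeneous component of total degree $n$ (recalling $z_B = \un + \sum_{m>0} \Gamma_m$ and expanding $z_B^{k+1}$) yields
$$
\Delta(\Gamma_n) = \sum_{k=0}^n (z_B^{k+1})_{n-k} \otimes \Gamma_k = \sum_{k=0}^n \sum_{\substack{l_0+\cdots+l_k=n-k \\ l_i \ge 0}} \Gamma_{l_0}\cdots\Gamma_{l_k} \otimes \Gamma_k,
$$
with the convention $\Gamma_0 = \un$. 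The two expressions coincide, proving that $\Phi$ is a bialgebra map.

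Finally, since $F$ and $H$ are both graded connected commutative bialgebras, they are Hopf algebras with antipodes uniquely determined by the bialgebra structure (via the standard recursive formula on the augmentation ideal). Any bialgebra homomorphism between such objects automatically commutes with the antipodes, so $\Phi$ is a Hopf algebra homomorphism. The only conceptual point worth flagging is that the construction relies essentially on the algebraic independence of the $\Gamma_k$ in $H$ (noted just before equation (\ref{univcoup})), which is what makes the multiplicative extension of $\Phi$ injective on each graded piece; this is not needed for the map to be well-defined, but it is what ensures $\Phi$ faithfully transports the Fa\`a di Bruno combinatorics into the Hopf algebra of 1PI graphs.
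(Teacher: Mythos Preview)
Your proof is correct and is precisely the argument the paper leaves implicit: the corollary is stated without proof as an immediate consequence of Proposition~\ref{thething2}, and what you have written is the standard unpacking of that implication (free polynomial generators $a_n$, multiplicative extension, coproduct check via the degree-$n$ component of $\Delta(z_B)=\sum_k z_B^{k+1}\otimes\Gamma_k$, and the automatic antipode compatibility for graded connected bialgebras). One small correction: the algebraic independence of the $\Gamma_k$ is noted just \emph{after} the display~(\ref{univcoup}), not before it, and as you rightly observe it is irrelevant for well-definedness of $\Phi$---it only ensures injectivity.
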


Equivalently, there exists a natural group homomorphism $\rho$ from $G(A)$, the $A$-valued character group of $H$, to the $A$-valued character group $G_{F}$ of $F$:
$$
	G(A) \ni \varphi \mapsto \rho(\varphi):=\varphi \circ \Phi : { F} \to A.
$$

 
\subsection{Dyson's formula and the exponential method}
\label{ssect:DysonExpo}

Let us briefly make explicit the exponential method for perturbative renormalization in the particular context of the Hopf algebras of renormalization. We denote by $H:=\bigoplus_{n\ge 0}H_n$ the Connes--Kreimer Hopf algebra of 1PI --UV-divergent-- Feynman graphs and by $G(A[[g]])$ the group of regularized characters from $H$ to the commutative unital algebra $A$ over $\mathbb{C}$ to be equipped with a $\mathbb{C}$-linear projector $P_-$ such that the image of $P_+:=id -P_-$ is a subalgebra. The algebra $A$ and projector $P_-$ reflect the regularization method respectively the renormalization scheme. The unit in $G(A[[g]])$ is denoted by $e$. The corresponding graded Lie algebra of infinitesimal characters is denoted by $\frak{g}(A[[g]])=\bigoplus_{n>0}\frak{g}_n(A[[g]])$. Let $\tilde\psi \in G(A[[g]])$ be the character corresponding to the regularized Feynman rules, derived from a Lagrangian of a -- multiplicative renormalizable -- perturbative quantum field theory, say, for instance $\phi^4$ in four space-time dimensions. Hence any $l$-loop graph $\Gamma \in H_l$  is mapped to: 
\begin{equation}
\label{FeynChar}
	\Gamma \xrightarrow{ \tilde\psi} \tilde\psi(\Gamma):=g^{|\Gamma|}\psi(\Gamma)=g^{l}\psi(\Gamma).
\end{equation}
Note that the character $\psi$ associates with a Feynman graph the corresponding Feynman integral whereas the character $\tilde\psi$ maps any graph with $|\Gamma|$ loops to its regularized Feynman integral multiplied by the $|\Gamma|$th power of the coupling constant. 

Recall that the exponential method of renormalization proceeds order-by-order in the number of loops. At one-loop order, one starts by considering the infinitesimal character of order one from $H$ to $A[[g]]$: 
$$
	 \tilde\tau_{1}:= P_- \circ \tilde\psi \circ \pi_1 \in \frak{g}_1(A[[g]]),
$$
The corresponding exponential {\it{counterfactor}} from $H$ to $A[[g]]$ is given by:
$$
	\tilde\Upsilon^-_{1}:=\exp^*(-  \tilde\tau_{1}).
$$
From the definition of the Feynman rules character (\ref{FeynChar}) we get: 
\allowdisplaybreaks{
\begin{eqnarray*}
	 \tilde\Upsilon^-_{1}(\Gamma_k)&=&\exp^*(-P_-\circ \tilde\psi\circ\pi_1)(\Gamma_k)\\
							  &=& g^{k}\exp^*(-P_-\circ \psi \circ\pi_1)(\Gamma_k)\\
							  &=& g^{k} \Upsilon^-_1(\Gamma_k).
\end{eqnarray*}}
The character:
$$
	 \tilde\psi^+_{1} :=  \tilde\Upsilon^-_{1} *  \tilde\psi
$$ 
is $1$-regular, i.e. it maps $H_1$ to $A_+[[g]]$. Indeed, as $h \in H_1$ is primitive we find $ \tilde\psi^+_{1}(h)= \tilde\psi(h) +  \tilde\Upsilon^-_{1}(h)= \tilde\psi(h)-P_-( \tilde\psi(h))=P_+( \tilde\psi(h))$. In general,  by multiplying the order $n-1$-regular character by the counterfactor $ \tilde\Upsilon^-_{n}$  we obtain the $n$-regular character:
$$
		 \tilde\psi^+_{n} :=  \tilde\Upsilon^-_{n} *  \tilde\psi^+_{n-1} =  \tilde\Upsilon(n) *  \tilde\psi,
$$
with the exponential order $n$ counterterm  $ \tilde\Upsilon(n):=  \tilde\Upsilon^-_{n} * \cdots *  \tilde\Upsilon^-_{1}$. Hence, in the Hopf algebra context the exponential method of iterative renormalization consists of a successive multiplicative construction of higher order regular characters from lower  order regular characters, obtained by multiplication with counterfactors. 

Next, we define the $n$th-order bare coupling constant:
$$
	g_n(g) =  \tilde\Upsilon^-_{n}(gz_B)(g) = g  + \sum_{k \ge 0} g^{k+1} \Upsilon^-_n(\Gamma_k) \in gA[[g]]. 
$$ 
Recall that $\Upsilon^-_n(\Gamma_k) = 0$ for $k<n$. We denote the $m$-fold iteration:
$$
	g_1 \circ \cdots \circ g_m (g) =: g^{ \circ}_{m}(g),
$$ 
where by Proposition~\ref{thething2} and from the general properties of Fa\`a di Bruno formulas, we have: $g_m^{\circ}(g)=\tilde\Upsilon(n)(gz_B)$.
We also introduce the  $n$th-order Z-factors:
$$
	Z^{(n)}_g(g):=   \tilde\Upsilon(n)(z_g)(g)
		\quad {\rm{and}} \quad 
	Z^{(n)}_\phi(g):=   \tilde\Upsilon(n)(z_\phi)(g),
$$
so that the $n$th-order renormalized 2- and 4-point 1PI Green's functions are:
\allowdisplaybreaks{
\begin{eqnarray*}
	 G_{R,n}^{(4)}(g):=g\tilde\psi^+_{n}(z_g)(g) &=&   g\tilde\Upsilon(n) *  \tilde\psi_g(z_g)(g) 
	   = \sum_{l \ge 0}   \tilde\Upsilon(n)(z_B^l z_g) g^{l+1}\psi(\Gamma^{(4)}_l) \\
	&=&    \tilde\Upsilon(n)( z_\phi)^2 \sum_{l \ge 0}   (\tilde\Upsilon(n)(gz_B)(g) )^{l+1}\psi(\Gamma^{(4)}_l)\\
	&=& (Z^{(n)}_\phi(g))^2 \sum_{l \ge 0}  {g_m^\circ(g)}^{l+1}\psi(\Gamma^{(4)}_l),
\end{eqnarray*}}
or, $G_{R,n}^{(4)}(g)=(Z_\phi^{(n)}(g))^2G^{(4)}(g_m^\circ(g))$. Similarly, $\tilde\psi^+_{n}(z_\phi)(g)= Z^{(n)}_\phi(g) \sum_{l \ge 0}  ( \tilde\Upsilon(n)(gz_B )(g))^{l}\psi(\Gamma^{(2)}_l)$ and $G_{R,n}^{(2)}(g)=Z_\phi^{(n)}(g)G^{(2)}(g_m^\circ(g))$. This corresponds to a Lagrangian multiplicatively renormalized up to order $n$:
$$
	L^{(n)}_{ren} :=\frac{1}{2} Z^{(n)}_\phi(g) \partial_\mu \phi \partial^\mu \phi  - \frac{gZ^{(n)}_g(g)}{4!}\phi^4.
$$
However, using Propositions \ref{prop:ExpFaa1} and \ref{prop:ExpFaa2}, we may also rescale the wave function and write:
$$
	L^{(n)}_{ren} :=\frac{1}{2}  \partial_\mu \phi_{n,0} \partial^\mu \phi_{n,0}  - \frac{g^{ \circ}_{n}(g)}{4!}\phi_{n,0}^4.
$$
where $\phi_{n,0}:=\sqrt{Z^{(n)}_\phi(g)} \phi$. Physically, on the level of the Lagrangian, the exponential renormalization method corresponds therefore to successive reparametrizations of the bare coupling constant.

 
\section{On locality and non Rota--Baxter type subtraction schemes}
\label{sect:nonRB}

In this last section we present a class of non-Rota--Baxter type subtraction schemes combining the idea of fixing the values of Feynman rules at given values of the parameters and the minimal subtraction scheme in dimensional regularization. The latter is known to be local \cite{CasKen,Collins} and we will use this fact to prove that the new class of non-Rota--Baxter type schemes is local as well. 

We first introduce some terminology. Let $\psi$ denote a dimensionally regularized Feynman rules character corresponding to a perturbatively renormalizable (massless, for greater tractability) quantum field theory. It maps the graded connected Hopf algebra $H=\bigoplus_{n \ge 0}H_n$ of 1PI Feynman graphs into the algebra $A$ of Laurent series with finite pole part. In fact, to be more precise, the coefficients of such a Laurent series are functions of the external parameters. In this setting, Eq.~(\ref{eq:stone-of-contention}) specializes to (see e.g. \cite{Collins}):
$$
	H \ni \Gamma \mapsto \psi (\Gamma;\mu,g,s) = \sum_{n = -N}^{\infty} a^\mu_n(\Gamma;g,s) \varepsilon^{n}. 
$$
Here, $\mu$ denotes 'tHooft's mass, $\varepsilon$ the dimensional regularization parameter and $s$ the set of external parameters others than the coupling constant $g$. The algebra $A$ is equipped with a natural Rota--Baxter projector $T_-$ mapping any Laurent series to its pole part: 
$$
	T_-( \psi (\Gamma;\mu,g,s) ) :=  \sum_{n = -N}^{-1} a^\mu_n(\Gamma,g,s) \varepsilon^{n}. 
$$
This is equivalent to a direct decomposition of $A$ into the subalgebras $A_-:=T_-(A)$ and $A_+:=T_+(A)$.   

In this setting, recall that the BWH decomposition gives rise to a unique factorization:
$
	\psi=\psi^{-1}_-*\psi_+
$
into a counterterm map $\psi_-$ and the renormalized  Feynman rules map $\psi_+$. Both maps are characterised by Bogoliubov's renormalization recursions: 
$
	\psi_\pm = e \pm T_\pm \circ (\psi_- *(\psi - e)).  
$
The Rota--Baxter property of $T_-$ ensures that both, $\psi_-$ and $\psi_+$, are characters. 

Recall the notion of locality  \cite{CasKen,Collins}. We call a character $\psi$ (and, more generally, a linear form on $H$) strongly local if the coefficients in the Laurent series which it associates to graphs are polynomials in the external parameter. Notice that the convolution product of two strongly local characters is strongly local: strongly local characters form a subgroup of the group of characters. On the other hand a character $\psi$ is local if its counterterm $\psi_-$ is strongly local. Notice that strong locality implies locality. Indeed, since, by the Bogoliubov formula $\psi_-=e-T_-(\psi_-\circ(\psi-e))$, $\psi_-$ is strongly local if $\psi$ is strongly local due to the recursive nature of the formula.

It is well-known that for a multiplicatively renormalizable perturbative QFT with dimensionally regularized  Feynman rules character  $\psi$, the counterterm $\psi_-$ following from Bogoliubov's recursion is strongly local. 
Moreover, as the Birkhoff decomposition is unique, recall that comparing with the exponential method we get:    
$
	\psi_- =  \Upsilon_{\infty}^-:=\lim\limits_\rightarrow \Upsilon(n)
$
with:
$
	\Upsilon(n):= \Upsilon^-_{n}  \ast \cdots \ast \Upsilon_{1}^-.
$  
Hence, in the particular case of a Rota--Baxter type subtraction scheme the exponential method provides a decomposition of Bogoliubov's   counterterm character with respect to the grading of the Hopf algebra. The following Proposition shows that the exponential counterfactors inherit the strong locality property of the Bogoliubov's counterterm character.

\begin{prop}
In the context of minimal subtraction, the exponential counterfactors $\Upsilon^-_{i}$ and hence the exponential counterterms $\Upsilon(n)$ are strongly local iff $\psi_-= \Upsilon_{\infty}^-$ is strongly local .  
\end{prop}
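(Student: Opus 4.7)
The plan is to prove the two implications of the iff separately. The direction ``strongly local counterfactors $\Upsilon^-_i$ imply strongly local $\psi_-$'' is immediate from the material already in the text: strongly local characters form a subgroup of $G(A)$, so each finite product $\Upsilon(n) = \Upsilon^-_n \ast \cdots \ast \Upsilon^-_1$ is strongly local as soon as every factor is, and Proposition \ref{prop:ZassenBogo} ensures that on any $\Gamma \in H_k$ one has $\psi_-(\Gamma) = \Upsilon(k)(\Gamma)$, so $\psi_-$ inherits strong locality graph by graph.

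For the converse, I proceed by induction on $n$ to show that $\Upsilon^-_n$ is strongly local, assuming $\psi_-$ is. The inductive hypothesis is that $\Upsilon^-_1, \ldots, \Upsilon^-_{n-1}$ are already strongly local, which by the subgroup property makes $\Upsilon(n-1)$ strongly local too (the base case $\Upsilon(0) = e$ is trivial). The crucial case is to handle $\Upsilon^-_n$ in its ``generating'' degree $H_n$. I would compare the two factorisations $\psi^+_{n-1} = \Upsilon(n-1) \ast \psi$ and $\psi_+ = \psi_- \ast \psi$. Since $\Upsilon(n-1) - \psi_-$ vanishes on $H^{n-1}$ by Proposition \ref{prop:ZassenBogo}, the grading of the coproduct forces the convolution difference $\bigl((\Upsilon(n-1) - \psi_-) \ast \psi\bigr)(h)$ for $h \in H_n$ to reduce to the single boundary contribution $h \otimes \un$, giving
$$
\psi^+_{n-1}(h) - \psi_+(h) \;=\; \Upsilon(n-1)(h) - \psi_-(h).
$$
Applying $T_-$, and using $T_-\psi_+ = 0$ together with $T_-\psi_- = \psi_-$, I obtain the key identity
$$
\Upsilon^-_n(h) \;=\; -T_-\,\psi^+_{n-1}(h) \;=\; \psi_-(h) - T_-\,\Upsilon(n-1)(h), \qquad h \in H_n.
$$
This expresses $\Upsilon^-_n|_{H_n}$ as a difference of two strongly local forms (using that $T_-$, retaining only the polar part of a Laurent series with polynomial coefficients, preserves strong locality), so $\Upsilon^-_n|_{H_n}$ is strongly local.

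To finish the induction I extend strong locality of $\Upsilon^-_n$ from $H_n$ to all of $H$. Because the infinitesimal-like form $\mu^{\psi^+_{n-1}}_n$ is supported only on $H_n$, the convolution exponential collapses on each homogeneous component: $\Upsilon^-_n$ vanishes on $H_k$ unless $k = jn$ for some integer $j \ge 0$, and in that case
$$
\Upsilon^-_n\big|_{H_{jn}} \;=\; \frac{(-1)^j}{j!}\,\bigl(\mu^{\psi^+_{n-1}}_n\bigr)^{\ast j}\big|_{H_{jn}}.
$$
The strong locality of the single factor $\mu^{\psi^+_{n-1}}_n|_{H_n} = -\Upsilon^-_n|_{H_n}$ therefore propagates to all degrees, since pointwise products of strongly local Laurent series are again strongly local and only finite sums of such products occur. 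The main obstacle in the argument is the key identity for $\Upsilon^-_n|_{H_n}$: although $\Upsilon(n-1)$ and $\psi_-$ generally disagree on $H_n$, their discrepancy is corrected precisely by the polar projection defining the next counterfactor, and it is this delicate cancellation that makes the inductive step close.
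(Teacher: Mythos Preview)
Your proof is correct and follows essentially the same inductive strategy as the paper: both arguments isolate $\Upsilon_n^-|_{H_n}$ by comparing the partial counterterm $\Upsilon(n-1)$ with the full counterterm $\psi_-=\Upsilon_\infty^-$, then exponentiate to all degrees. The only difference is cosmetic: the paper obtains the key identity in one stroke via the group factorisation $\psi_-\ast\Upsilon(n-1)^{-1}\big|_{H_n}=\Upsilon_n^-\big|_{H_n}$ (strongly local as a convolution of strongly local characters), whereas you reach the equivalent additive identity $\Upsilon_n^-(h)=\psi_-(h)-T_-\Upsilon(n-1)(h)$ by passing through $\psi_{n-1}^+-\psi_+$ and applying $T_-$.
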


\begin{proof}
One direction is evident as strong locality of the counterfactors implies strong locality of $ \Upsilon_{\infty}^-$. The proof of the opposite direction follows by induction. For any $\Gamma \in H_1$ we find:
$$
	\psi_-(\Gamma) = -T_-\circ \psi\circ\pi_1(\Gamma), 
$$ 
which implies that $-T_-\circ \psi\circ\pi_1$ is strongly local. The strong locality of $ \Upsilon_{1}^-:=\exp^*(-T_- \circ \psi \circ \pi_1)$ follows from the usual properties of the exponential map in a graded algebra. 

Let us assume that strong locality holds for $\Upsilon_{1}^-,\ldots,\Upsilon_n^-$. For $\Gamma \in H_{n+1}$ we find (for degree reasons):
$$
	\psi_-*\Upsilon^{-1}(n)(\Gamma) = \cdots * \Upsilon_{n+2}\ast\Upsilon_{n+1}^-(\Gamma)= \Upsilon_{n+1}^-(\Gamma) 
							  = -T_- \circ \psi^+_n \circ \pi_{n+1}(\Gamma).
$$
Strong locality of $-T_- \circ \psi^+_n \circ \pi_{n+1}$ follows, as well as strong locality of $\Upsilon_{n+1}^-=\exp^*(-T_- \circ \psi^+_n \circ \pi_{n+1})$.
\end{proof}

The next result with be useful later.

\begin{lem} \label{lem:help1}
For a strongly local character $\phi$ in the context of a proper projector $P_-$ on $A$, the exponential method leads to a decomposition $\phi= \Upsilon_{\infty}^- * \phi^+$ into a strongly local counterterm $ \Upsilon_{\infty}^-$ as well as a strongly local regular character $\phi^+$.  
\end{lem}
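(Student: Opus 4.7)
I would argue by induction on the loop order $n$, exploiting the recursive construction $\phi^+_{n+1}=\Upsilon^-_{n+1}*\phi^+_n$ from Proposition~\ref{prop:Exp2}. The plan rests on three preliminary observations which I would establish first and then assemble. (i) The projector $P_-$, and hence $P_+$, preserves strong locality on $A$: for the concrete ``proper'' projector introduced in this section (built from the minimal subtraction projector together with fixing values at prescribed parameters) $P_-$ acts coefficient-wise on Laurent series, so that polynomial dependence on the external parameters is not destroyed. (ii) The strongly local characters form a subgroup of $G(A)$, as recalled in the paragraph preceding the lemma. (iii) The convolution exponential of a strongly local infinitesimal character is strongly local; this follows from (ii) and the fact that the grading of $H$ makes the series $\exp^*(\alpha)=\sum_{k\ge 0}\alpha^{*k}/k!$ terminate on each homogeneous component.

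With these in hand, the induction proceeds as follows. For $n=0$, $\phi^+_0=\phi$ is strongly local by hypothesis. Assuming $\phi^+_n$ is strongly local, the linear map
$$
-P_-\circ\phi^+_n\circ\pi_{n+1}
$$
is an infinitesimal character by Proposition~\ref{prop:Exp1}, and it is strongly local by (i). Consequently, by (iii), its convolution exponential $\Upsilon^-_{n+1}$ is a strongly local character, and by (ii) the product $\phi^+_{n+1}=\Upsilon^-_{n+1}*\phi^+_n$ is again strongly local. Iterating, every finite counterterm $\Upsilon(n)=\Upsilon^-_n*\cdots*\Upsilon^-_1$ is strongly local.

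Finally I would pass to the pro-unipotent limit. Since the grading ensures that the value of $\Upsilon^-_\infty$ and of $\phi^+$ on any fixed $h\in H_k$ is already reached after finitely many steps of the recursion (namely at step $k$), strong locality of all finite stages transfers to the limiting characters $\Upsilon^-_\infty$ and $\phi^+:=\phi^+_\infty$, yielding the desired decomposition of $\phi$.

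The main obstacle is really just observation~(i): one must check that for the non-Rota--Baxter projectors considered in this section, which mix MS with evaluation at prescribed values of the external parameters, strong locality is preserved. Once this is granted, the rest is a routine induction using the recursive nature of the exponential method together with the subgroup property of strongly local characters; no new combinatorics is required beyond what is already encoded in Proposition~\ref{prop:Exp1} and Proposition~\ref{prop:Exp2}.
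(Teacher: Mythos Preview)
Your proposal is correct and follows essentially the same approach as the paper: an induction on the loop order using that $P_-$ preserves strong locality, that the convolution exponential of a strongly local infinitesimal character is strongly local, and that strongly local characters form a subgroup. The paper's own proof is in fact much terser---it simply observes that $\Upsilon_1^-=\exp^*(-P_-\circ\phi\circ\pi_1)$ is strongly local, hence so is $\phi_1^+=\Upsilon_1^-*\phi$, and then remarks that ``the same reasoning then applies at each order''; your version just unpacks these steps explicitly and adds the (routine) passage to the limit.
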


\begin{proof}
The proof follows once again from the definition of the recursion. Indeed, the first order counterfactor in the exponential method is:
$$
	\Upsilon_{1}^- = \exp^*(-T_- \circ \phi \circ \pi_1), 
$$ 
which is clearly strongly local, since $\phi$ is strongly local. Then $\phi^+_1= \Upsilon_{1}^- * \phi$ is strongly local as a product of strongly local characters. The same reasoning then applies at each order. 
\end{proof}

 
\subsection{A non-Rota--Baxter subtraction scheme}
\label{ssect:nonRB}

We introduce now another projection, denoted $T^q_-$. It is a projector defined on $A$ in terms of the RB map $T_-$:
\begin{equation}
\label{Taylor}
	T_-^q:=T_- + \delta^n_{\varepsilon,q},
\end{equation}
where the linear map $\delta^n_{\varepsilon,q}$ is the  Taylor jet operator up to $n$th-order with respect to the variable $\varepsilon$ at zero, which evaluates the coefficient functions at all orders between $1$ and $n$ at the fixed value $q$: 
$$
	\delta_{\varepsilon,q}^n(\sum_{m=-N}^{\infty} a_m(x)\varepsilon^m):=\sum\limits_{i=1}^na_i(q)\varepsilon^i. 
$$
Note the condensed notation, where $q$ stands for a fixed set of values of parameters. The choice of the projection amounts, from the point of view of the renormalized quantities, to fix the coefficient functions at 0 for given values of  parameters (e.g. external momenta). One verifies that $T_-^q$ defines a linear projection. Moreover, the image of $T_+^q:=id - T_-^q$ forms a subalgebra in $A$ (the algebra of formal power series in $\varepsilon$ whose coefficient functions of order less than $n$ vanish at the chosen particular values $q$ of parameters), but the image of $T_-^q$ does not. This implies immediately that the projector $T_-^q$ is not of Rota--Baxter type.
Hence, we have in general:
$$
	T^q_-( \psi (\Gamma;\mu,g,s) ) =  \sum_{l = -N}^{-1} a^\mu_l(\Gamma,g,s) \varepsilon^{l} 
							+ \sum_{i=1}^na^\mu_i(\Gamma,g,q)\varepsilon^i
$$
and
$$
	T^q_+( \psi (\Gamma;\mu,g,s) ) =  \sum_{l = 0}^{\infty} a^\mu_l(\Gamma, g,s) \varepsilon^{l} 
							 - \sum_{i=1}^na^\mu_i(\Gamma,g,s)\varepsilon^i.
$$
We find:

\begin{prop}
Using the subtraction scheme defined in terms of projector $T_-^q$ on $A$, the exponential method applied to the Feynman rules character $\psi$ gives a regular character:
$$
	\psi_{q}^+=\Upsilon_{\infty,q}^- \ast \psi ,
$$ 
where we use a self-explaining notation for the counterterm $\Upsilon_{\infty,q}^- $ and the renormalized character $\psi_{q}^+$.
\end{prop}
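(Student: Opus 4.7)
The statement is that the exponential recursion of Proposition~\ref{prop:Exp2} can be run with $P_-=T_-^q$. Since the recursive construction of counterfactors and regular characters in Section~\ref{sect:Exp} only required (i) that $P_-$ be a linear projector on $A$ and (ii) that the image of $P_+:=\mathrm{id}-P_-$ be a subalgebra of $A$, the plan is simply to check these two facts for $T_-^q$ and then invoke Proposition~\ref{prop:Exp2} verbatim.

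For idempotence of $T_-^q = T_- + \delta_{\varepsilon,q}^n$, first I would use the (Rota--Baxter) projector property $T_-^2=T_-$. For $\delta_{\varepsilon,q}^n\circ\delta_{\varepsilon,q}^n$, I would observe that the output of $\delta_{\varepsilon,q}^n$ is a polynomial $\sum_{i=1}^n a_i(q)\varepsilon^i$ whose coefficients are \emph{constant} in the external parameter $s$, so re-evaluating at $s=q$ leaves it unchanged; hence $(\delta_{\varepsilon,q}^n)^2=\delta_{\varepsilon,q}^n$. For the cross terms, $T_-\circ\delta_{\varepsilon,q}^n=0$ because $\delta_{\varepsilon,q}^n$ produces only non-negative powers of $\varepsilon$, and $\delta_{\varepsilon,q}^n\circ T_-=0$ because $T_-$ produces only strictly negative powers; adding the pieces gives $(T_-^q)^2=T_-^q$.

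Next, to see that $T_+^q(A)$ is a subalgebra, I would spell out $T_+^q(f)=a_0(s)+\sum_{i=1}^n\bigl(a_i(s)-a_i(q)\bigr)\varepsilon^i+\sum_{i>n}a_i(s)\varepsilon^i$, so that $T_+^q(A)$ consists precisely of the formal power series in $\varepsilon$ whose coefficients at orders $1,\ldots,n$ vanish at $s=q$. Taking two such series $f,g$ and looking at the coefficient of $\varepsilon^k$ in $fg$ for $1\le k\le n$, every monomial $a_i^fa_j^g$ with $i,j\ge 1$ vanishes at $s=q$ by assumption, while the two terms with $i=0$ or $j=0$ vanish because the other factor has $a_k(q)=0$. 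Hence the product lies again in $T_+^q(A)$.

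With (i) and (ii) established, Proposition~\ref{prop:Exp2} applied to $\psi\in G(A)$ and $P_-=T_-^q$ produces order by order the counterfactors $\Upsilon^-_{l,q}$ and regular characters $\psi^+_{l,q}$, and in the projective limit yields $\psi_q^+=\Upsilon_{\infty,q}^-\ast\psi$ with $\psi_q^+\in G(T_+^q(A))$. The whole argument is essentially bookkeeping; the only slightly non-automatic point is the orthogonality $T_-\delta_{\varepsilon,q}^n=\delta_{\varepsilon,q}^n T_-=0$ that makes $T_-^q$ an honest projector, and it is exactly this feature (rather than any Rota--Baxter identity, which manifestly fails here) that makes the exponential method applicable.
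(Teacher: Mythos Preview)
Your proposal is correct and matches the paper's approach: the paper does not give a separate proof of this proposition, since it is an immediate instantiation of Proposition~\ref{prop:Exp2} once the two hypotheses (that $T_-^q$ is a projector and that the image of $T_+^q$ is a subalgebra) are checked, and these are asserted in the text just before the proposition. Your explicit verification of idempotence via the decomposition $T_-^q=T_-+\delta_{\varepsilon,q}^n$ and the cross-term vanishing, together with the coefficient-by-coefficient check that $T_+^q(A)$ is multiplicatively closed, simply spells out what the paper leaves to the reader.
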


Now we would like to prove that the exponential method using the projector $T_-^q$ on $A$ gives local counterterms. That is, we want to prove that the counterfactor $\Upsilon_{n,q}^-$ for all $n$, and hence $\Upsilon_{\infty,q}^-$, are strongly local.  In the following:
$$
	\psi_-= \Upsilon_{\infty}^- = \cdots  *\Upsilon_{n}^- *  \cdots  *\Upsilon_{2}^- *\Upsilon_{1}^-  
$$
stands for the multiplicative decomposition of Bogoliubov's strongly local counterterm character following from the exponential method using the minimal subtraction scheme $T_-$. Whereas:
 $$
	\Upsilon_{\infty,q}^- = \cdots  *\Upsilon_{n,q}^- *  \cdots * \Upsilon_{2,q}^- *  \Upsilon_{1,q}^-  
$$
stands for the counterterm character following from the exponential method according to the modified subtraction scheme $T_-^q$. The following Lemma is instrumental in this section.

\begin{lem}\label{lemtech}
For a substraction scheme such that the image of $P_+$ is a subalgebra, let $\phi$ be a $n$-regular character and $\xi$ be a regular character, then:
$$
	P_-\circ (\phi\ast \xi)_{n+1}=P_-\circ \phi_{n+1}.
$$
In particular, the counterfactor $\Upsilon^-_{n+1}$ associated to $\phi$ is equal to the counterfactor associated to $\phi\ast \xi$. 

It follows that, if the exponential decomposition of a character $\psi$ is given by: $\psi=\Upsilon{^-_\infty}\ast\psi^+$, the exponential decomposition of the convolution product of $\psi$ with a regular character $\xi$ is given by: $\psi\ast\xi=\Upsilon_\infty^-\ast(\psi^+\ast \xi)$.
\end{lem}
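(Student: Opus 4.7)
The plan is to prove the first equality directly using the coproduct decomposition and the subalgebra hypothesis on $A_+ = \operatorname{im}(P_+)$, then read off the counterfactor equality as an immediate consequence, and finally establish the decomposition claim by induction on the order of the exponential recursion.

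First, fix $y \in H_{n+1}$ and expand the convolution using Sweedler notation with explicit grading, $\Delta(y) = \sum_{j=0}^{n+1} y^{(1)}_j \otimes y^{(2)}_{n+1-j}$. Separating the boundary contributions $j=0$ and $j=n+1$ (where $\xi(\un) = \phi(\un) = 1_A$) from the ``middle'' terms, one obtains
$$
(\phi \ast \xi)(y) \;=\; \phi(y) \;+\; \xi(y) \;+\; \sum_{0<j<n+1} \phi(y^{(1)}_j)\,\xi(y^{(2)}_{n+1-j}).
$$
Now apply $P_-$. The term $\xi(y)$ lies in $A_+$ since $\xi$ is regular. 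In each middle term we have $1 \le j \le n$, hence $\phi(y^{(1)}_j) \in A_+$ by $n$-regularity of $\phi$, while $\xi(y^{(2)}_{n+1-j}) \in A_+$ by regularity of $\xi$; since $A_+$ is a subalgebra of $A$, the product lies in $A_+$ and is killed by $P_-$. Only $P_-\phi(y) = P_- \circ \phi_{n+1}(y)$ survives, proving the first assertion.

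The counterfactor claim is then immediate: by Proposition~\ref{prop:Exp1}, the counterfactor at order $n+1$ associated with a given $n$-regular character $\vf$ is $\exp^*(-P_- \circ \vf \circ \pi_{n+1})$, and this depends only on $P_- \circ \vf_{n+1}$; by what has just been proved this quantity is unchanged when $\phi$ is replaced by $\phi\ast\xi$ (noting that $\phi\ast\xi$ is also $n$-regular, see below).

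For the final statement, I would argue by induction that the exponential recursion applied to $\psi\ast\xi$ produces, at each order $k$, the same counterfactor $\Upsilon^-_k$ as the recursion applied to $\psi$, and that the $k$-regular character obtained is $\psi^+_k \ast \xi$. The base case $k=0$ is tautological. For the inductive step, one first checks that $\psi^+_k \ast \xi$ is indeed $k$-regular: for $y \in H_l$ with $l \le k$, every term of $(\psi^+_k \ast \xi)(y)$ is a product of two elements of $A_+$, hence lies in $A_+$ by the subalgebra property. The first part of the lemma, applied with $\phi = \psi^+_k$, then gives $P_- \circ (\psi^+_k \ast \xi)_{k+1} = P_- \circ (\psi^+_k)_{k+1}$, so the counterfactors at order $k+1$ coincide, and
$$
(\psi \ast \xi)^+_{k+1} \;=\; \Upsilon^-_{k+1} \ast (\psi^+_k \ast \xi) \;=\; \psi^+_{k+1} \ast \xi.
$$
Passing to the limit yields $\psi\ast\xi = \Upsilon^-_\infty \ast (\psi^+ \ast \xi)$, as claimed.

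The argument is essentially elementary; the only real point of care is the degree bookkeeping in the coproduct and the systematic use of the subalgebra property of $A_+$ to annihilate the cross terms under $P_-$. The sole non-automatic verification is that $\psi^+_k \ast \xi$ inherits $k$-regularity from its two factors, which again is exactly where the hypothesis that $A_+$ is closed under multiplication enters.
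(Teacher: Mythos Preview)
Your proof is correct and follows essentially the same approach as the paper's own argument: expand the convolution on an element of degree $n+1$, isolate $\phi(y)$, and observe that all remaining terms lie in $A_+$ by the regularity hypotheses and the subalgebra property, hence are annihilated by $P_-$. The paper's proof is very terse (it simply says ``the others follow from the definition of the exponential methods by recursion''), whereas you have spelled out the induction for the final assertion and explicitly verified the intermediate $k$-regularity of $\psi^+_k \ast \xi$; this added detail is welcome but does not constitute a different method.
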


\begin{proof}
Indeed, for a $n+1$-loop graph $\Gamma$, $\phi\ast\xi(\Gamma)=\phi(\Gamma)+\xi(\Gamma) + c$, where $c$ is a linear combination of products of the image by $\phi$ and $\xi$ of graphs of loop-order strictly less than $n+1$. The regularity hypothesis and the hypothesis that the image of $P_+$ is a subalgebra imply $P_-(\xi(\Gamma)+c)=0$, hence the first assertion of the Lemma. The others follow from the definition of the exponential methods by recursion.
\end{proof}

\begin{lem}\label{truc} 
Let $\psi$ be a regular character for the minimal substraction scheme ($T_-\circ \psi=0$). Using the subtraction scheme defined in terms of projector $T_-^q$ on $A$, the exponential method applied to $\psi$ gives 
$
	\psi_{q}^+=\Upsilon_{\infty,q}^- \ast \psi ,
$ 
where, for each graph $\Gamma$, $\Upsilon_{\infty,q}^-(\Gamma)$ is a polynomial with constant coefficients in the perturbation parameter $\varepsilon$. In particular, $\Upsilon_{\infty,q}^-$ is strongly local.
\end{lem}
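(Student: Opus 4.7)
The plan is to exploit the hypothesis $T_-\circ\psi=0$ to kill the pole–part contribution of $T_-^q$, so that only the polynomial Taylor–jet piece $\delta^n_{\varepsilon,q}$ ever acts on the relevant characters. Because $\delta^n_{\varepsilon,q}$ produces polynomials in $\varepsilon$ whose coefficients are evaluated at the fixed parameters $q$ (hence are constant in the external parameters $s$), an inductive argument on the loop order should propagate this property through the whole exponential recursion.

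Concretely, I would carry out a joint induction on $k\ge 0$, showing simultaneously that
\begin{enumerate}
\item[(a)] $T_-\circ\psi^+_{k,q}=0$, i.e.\ $\psi^+_{k,q}$ still has no poles in $\varepsilon$;
\item[(b)] the counterfactor $\Upsilon^-_{k,q}$ takes values that are polynomials in $\varepsilon$ with $s$–constant coefficients, and vanish at $\varepsilon=0$ on the augmentation ideal.
\end{enumerate}
The base case $k=0$ is the hypothesis $T_-\circ\psi=0$. For the induction step, (a) together with the definition of $T_-^q$ gives
$$
\tau_{k+1}:=T_-^q\circ\psi^+_{k,q}\circ\pi_{k+1}=\delta^n_{\varepsilon,q}\circ\psi^+_{k,q}\circ\pi_{k+1},
$$
and the right hand side is, by construction, a polynomial in $\varepsilon$ with no constant term whose coefficients depend only on $g,\mu,q$, not on $s$. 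Since the subalgebra of such polynomials is stable under multiplication, every convolution power $\tau_{k+1}^{*j}$ inherits the same property, so $\Upsilon^-_{k+1,q}=\exp^*(-\tau_{k+1})$ satisfies (b). To close the induction on (a), I would expand
$$
\psi^+_{k+1,q}=\Upsilon^-_{k+1,q}*\psi^+_{k,q}
$$
using Sweedler notation and observe that each summand is a product of a value of $\Upsilon^-_{k+1,q}$ (polynomial, no poles) and a value of $\psi^+_{k,q}$ (no poles by the previous step of the induction); hence the product has no poles, i.e.\ $T_-\circ\psi^+_{k+1,q}=0$.

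Passing to the inductive limit yields $\Upsilon^-_{\infty,q}=\cdots*\Upsilon^-_{n,q}*\cdots*\Upsilon^-_{1,q}$, which on each graph $\Gamma$ is a finite convolution (for degree reasons), hence a polynomial in $\varepsilon$ with $s$–constant coefficients, which is precisely strong locality. The main technical point to handle carefully is the joint nature of the induction: the regularity statement (a) is what makes $T_-^q$ behave as $\delta^n_{\varepsilon,q}$ at the next step, and this in turn is what keeps the counterfactor polynomial; neither half of the induction can be proved independently of the other. Beyond this, the argument is a routine combination of the definition of the exponential method (Proposition \ref{prop:Exp2}) with the elementary fact that the space of $\varepsilon$–polynomials with $s$–constant coefficients is a subalgebra of $A$, stable under the convolution exponential of an infinitesimal character taking values in it.
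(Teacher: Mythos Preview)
Your proposal is correct and follows essentially the same approach as the paper: the paper's proof is a two-sentence sketch observing that $T_-\circ\psi=0$ forces $T_-^q\circ\psi$ to reduce to the Taylor-jet part, hence to polynomials with constant coefficients, and then asserts that ``as usual, this behaviour is preserved by convolution exponentials, and goes therefore recursively over to the $\Upsilon_{i,q}^-$.'' Your joint induction on (a) and (b) is precisely the rigorous unpacking of that recursive step, and in particular your explicit verification that each $\psi^+_{k,q}$ remains pole-free is what justifies the paper's otherwise unargued ``recursively over.''
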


The Lemma follows from the definition of the substraction map $T_-^q$: by its very definition, since $\psi(\Gamma)$ is a formal power series in the parameter $\varepsilon$ (without singular part), $T_-^q\circ\psi (\Gamma)$ is  a polynomial (of degree less or equal to $n$) with constant coefficients in the perturbation parameter $\varepsilon$. As usual, this behaviour is preserved by convolution exponentials, and goes therefore recursively over to the $\Upsilon_{i,q}^-$ and to $\Upsilon_{\infty,q}^-$.

\begin{prop}  \label{prop:strongloc}
With the above hypothesis, i.e. a dimensionally regularized Feynman rules character $\psi$ which is local with respect to the minimal subtraction scheme, the counterfactors and counterterm of the exponential method, 
$
\Upsilon_{i,q}^-
$ 
respectively,
$ 
\Upsilon_{\infty,q},
$ 
obtained using the subtraction scheme defined in terms of the projector $T_-^q$ are strongly local. 
\end{prop}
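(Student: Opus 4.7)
The plan is to factorize $\psi$ as a product of a strongly local character and a $T_-^q$-regular character, and then exploit the fact (Lemma \ref{lemtech}) that right-multiplication by a $T_-^q$-regular character leaves every $T_-^q$-counterfactor unchanged. This will reduce the problem to the already treated strongly local case of Lemma \ref{lem:help1}.

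First, I would exploit the hypothesis that $\psi$ is local under minimal subtraction. Combined with the preceding proposition, this supplies the MS exponential decomposition $\psi = (\psi_-^{\mathrm{MS}})^{-1} \ast \psi_+^{\mathrm{MS}}$, in which $\psi_-^{\mathrm{MS}}$ is strongly local and $\psi_+^{\mathrm{MS}}$ is MS-regular (i.e.\ $T_- \circ \psi_+^{\mathrm{MS}} = 0$). Since strongly local characters form a subgroup, $(\psi_-^{\mathrm{MS}})^{-1}$ is also strongly local. Next, I would apply Lemma \ref{truc} to $\psi_+^{\mathrm{MS}}$ with respect to the new projector $T_-^q$: because $\psi_+^{\mathrm{MS}}$ is MS-regular, the lemma gives a further decomposition $\psi_+^{\mathrm{MS}} = \Xi^{-1} \ast \phi$ with $\Xi$ strongly local (in fact a polynomial in $\varepsilon$ with constant coefficients) and $\phi$ a $T_-^q$-regular character. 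Combining the two steps yields $\psi = \eta \ast \phi$, where $\eta := (\psi_-^{\mathrm{MS}})^{-1} \ast \Xi^{-1}$ is strongly local as a product of strongly local characters.

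With this factorization in hand the conclusion comes in two steps. By Lemma \ref{lem:help1} applied to $\eta$ under the projector $T_-^q$, all counterfactors, and the total counterterm, in the $T_-^q$-exponential decomposition of $\eta$ are strongly local. Then, iterating the order-by-order identity $P_- \circ (\phi' \ast \xi)_{n+1} = P_- \circ \phi'_{n+1}$ of Lemma \ref{lemtech} (with $\phi'$ the current $n$-regular character and $\xi$ a regular character), I would conclude that the $n$th $T_-^q$-counterfactor of $\psi = \eta \ast \phi$ coincides with the $n$th counterfactor attached to $\eta$ for every $n$. Hence the counterfactors $\Upsilon_{i,q}^-$ and the total counterterm $\Upsilon_{\infty,q}^-$ obtained from $\psi$ inherit the strong locality property from $\eta$.

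The step I expect to require the most care is the invariance of the counterfactors under the factorization $\psi = \eta \ast \phi$: this is where the non-Rota--Baxter character of $T_-^q$ really matters, and it relies entirely on the assumption that the image of $T_+^q$ is a subalgebra, so that the product of two $T_-^q$-regular characters is again $T_-^q$-regular (this is what makes $\phi$ available as a right factor that can be absorbed into the regular part of the exponential decomposition of $\eta$ without altering any counterfactor). A secondary bookkeeping issue is the need to handle consistently the two conventions for the exponential decomposition, $\psi^+ = \Upsilon_\infty^- \ast \psi$ versus $\psi = (\Upsilon_\infty^-)^{-1} \ast \psi^+$, that appear in the paper.
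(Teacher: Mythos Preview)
Your proposal is correct and follows essentially the same route as the paper: factor $\psi$ via the MS decomposition, apply Lemma~\ref{truc} to the MS-regular part to peel off a further strongly local factor, and then use Lemma~\ref{lemtech} to transfer the $T_-^q$-counterfactors of $\psi$ to those of the resulting strongly local product, where Lemma~\ref{lem:help1} finishes the job. Your closing remark about the bookkeeping of the two conventions $\psi^+=\Upsilon_\infty^-\ast\psi$ versus $\psi=(\Upsilon_\infty^-)^{-1}\ast\psi^+$ is apt; the paper's own write-up is slightly loose on exactly this point.
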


\begin{proof}
Indeed, we have, using the MS scheme, the BWH decomposition $\psi=\psi_-^{-1}\ast\psi_+$, where $\psi_-^{-1}$ is strongly local.
Applying the exponential method with respect to the projector $T_-^q$ to $\psi^+$ we get, according to Lemma~\ref{truc}, a decomposition $\psi^+=\Upsilon_+^-\ast\psi_{++}$, where we write $\Upsilon_+^-$ (resp. $\psi_{++}$) for the counterterm and renormalized character and where $\Upsilon_+^-$ is strongly local. We get: $\psi=\psi_-^{-1}\ast\Upsilon_+^-\ast\psi_{++}$, where $\psi_{++}$ is regular with respect to $T_-^q$.

From Lemma~\ref{lemtech}, we know that the counterfactors and counterterm for $\psi$ in the exponential method for $T_-^q$ are equal to the counterfactors and counterterm for $\psi_-^{-1}\ast\Upsilon_+^-$, which is a product of strongly local characters, and therefore is strongly local. The Proposition follows then from Lemma~\ref{lem:help1} and its proof.
\end{proof}

\subsection{A Toy-model calculation}
\label{ssect:nonRBToy}

In the following example we apply the above introduced local non-Rota--Baxter type subtraction scheme within dimensional regularization. We exemplify it by means of a simple toy model calculation. We work with the bicommutative Hopf algebra $H^{lad} =\bigoplus_{k\ge 0} H^{lad}_k$ of rooted ladder trees. Let us recall the general coproduct of the tree $t_n$ with $n$ vertices:
$$
	\Delta(t_n)= t_n \otimes \un + \un \otimes t_n + \sum_{k=1}^{n-1} t_{n-k} \otimes t_k.
$$ 
The regularized toy model is defined by a character $\psi \in G(A)$ mapping the tree $t_n$ to an $n$-fold iterated Riemann integral with values in $A:=\CC[[\varepsilon,\varepsilon^{-1}]$:
\begin{equation}
\label{example}
	\psi(p;\varepsilon,\mu)(t_n) := \mu^{\varepsilon} \int_p^\infty \psi(x;\varepsilon,\mu)(t_{n-1}) \frac{dx}{x^{1+\varepsilon}} 
	                                                 = \frac{1}{n! \varepsilon^n} \exp\bigl(-n\varepsilon \log(\frac{p}{\mu})\bigr),
\end{equation} 
with $\psi(p;\varepsilon,\mu)(t_1):=\mu^{\varepsilon}\int_{p}^\infty \frac{dx}{x^{1+\varepsilon}}$, with $\mu,\varepsilon > 0$, and where $p$ denotes an external momenta. Recall that $\mu$ ('tHooft's mass) has been introduced for dimensional reasons, so as to make the ratio $\frac{p}{\mu}$ a dimensionless scalar. In the following we will write $a:=\log(\frac{p}{\mu})$ and $b:=\log(\frac{q}{\mu})$, where $q$ is fixed. For later use we write out the first three values:
\allowdisplaybreaks{
\begin{eqnarray*}
	\psi(p;\varepsilon,\mu)(t_1) &=& \frac{1}{\varepsilon} 
									- a 
									+ \frac{1}{2}\varepsilon a^2 
								   	- \frac{1}{3!}\varepsilon^2 a^3              
									+ \frac{1}{4!}\varepsilon^3 a^4 - O(\varepsilon^4) \\
	\psi(p;\varepsilon,\mu)(t_2) &=& \frac{1}{2\varepsilon^2} 
									-  \frac{1}{\varepsilon} a
									+ a^2 
								   	- \frac{2}{3}\varepsilon a^3              
									+ \frac{1}{3}\varepsilon^2 a^4
									- \frac{2}{15}\varepsilon^3 a^5 + O(\varepsilon^4) \\             
	\psi(p;\varepsilon,\mu)(t_3) &=& \frac{1}{3!\varepsilon^3} 
									-  \frac{1}{2\varepsilon^2} a 
									+ \frac{3}{4\varepsilon} a^2 
								   	- \frac{3}{4} a^3              
									+ \frac{9}{16}\varepsilon a^4
								          - \frac{27}{80}\varepsilon^2 a^5 
								         + O(\varepsilon^3).            
\end{eqnarray*}}

Now, for a Laurent series $\alpha(p/\mu):=\sum_{n=-N}^{\infty} \alpha_n(p/\mu)\varepsilon^n$, where the coefficients $\alpha_n=\alpha_n(p/\mu)$ are functions of $p/\mu$, we define the following projector $P_-$:
\begin{equation}
\label{nonRB1}
	P_-\bigl(\sum_{n=-N}^{\infty}\alpha_n(p/\mu)\varepsilon^n\bigr) := 
	\sum_{n=-N}^{-1}\alpha_n(p/\mu)\varepsilon^n + \alpha_1(q/\mu)\varepsilon,
\end{equation}
where $q$ is fixed and chosen appropriately. We get:
$$
	P_+\bigl(\sum_{n = -N}^\infty \alpha_n(p/\mu)\varepsilon^n\bigr) 
		= \alpha_0 + (\alpha_1(p/\mu) - \alpha_1(q/\mu))\varepsilon 
		+ \sum_{n = 2}^\infty \alpha_n(p/\mu)\varepsilon^n \in \CC[[\varepsilon]].
$$
One verifies that:
$$
	P_\pm^2=P_\pm \quad {\rm{and}}\quad P_\pm \circ P_\mp = P_\mp \circ P_\pm = 0.
$$ 

Let us emphasize that $P_-$ is not a Rota--Baxter map. This implies that we are not allowed to apply formulae (\ref{eq:BogoliubovFormulae}) in Corollary \ref{cor:ck-Birkhoff} for the renormalization of $\psi(p;\varepsilon,\mu)$.

However, we will show explicitly that the exponential method applies in this case, giving at each order a local counterterm(-factor) character as well as a finite renormalized character. At first order we apply the $1$-regular character, $\psi^+_1$, to the one vertex tree:
\allowdisplaybreaks{
\begin{eqnarray*}
	\psi^+_1(t_1)    =\Upsilon(1)\ast \psi(t_1)
			      &=& \bigl(\exp^*(-P_- \circ \psi \circ \pi_1) \ast \psi\bigr) (t_1)\\
			      &=& -P_- \circ \psi \circ \pi_1(t_1) + \psi (t_1)\\
			      &=& -P_-(\psi(t_1)) + \psi (t_1)\\
			      &=& -\bigl(\frac{1}{\varepsilon} + \frac{1}{2} \varepsilon b^2 \bigr) 
			      						+ \frac{1}{\varepsilon} 
									- a
									+ \frac{1}{2}\varepsilon a^2 
								   	- \frac{1}{3!}\varepsilon^2 a^3  + O(\varepsilon^3)   \\
			      &=&  - a + \frac{1}{2}\varepsilon (a^2 - b^2) + O(\varepsilon^2).
\end{eqnarray*}}
Observe that the counterfactor, and hence counterterm at order one is:
$$
	\Upsilon(1)(t_1) = \Upsilon^-_1(t_1) = \exp^*(-P_-\circ \psi \circ \pi_1) (t_1) 
				=  - \frac{1}{\varepsilon} - \frac{1}{2} \varepsilon b^2 
			 	= - \frac{1}{\varepsilon} (1 +  \frac{1}{2} \varepsilon^2 b^2),
$$
which is local, i.e. does not contain any $\log(p/\mu)$ terms. Let us define $f=f(\varepsilon;q):=1 +  \frac{1}{2}\varepsilon^2 b^2$. Now, calculate the $2$-regular character, $\psi^+_2$, on the two vertex tree:
\allowdisplaybreaks{
\begin{eqnarray*}
	\psi^+_2(t_2) = \Upsilon(2)\ast \psi(t_2) 
	&=& \bigl(\exp^*(-P_- \circ \psi^+_1 \circ \pi_2)  \ast \exp^*(-P_- \circ \psi \circ \pi_1) \ast \psi \bigr)(t_2)\\
	&=& \psi(t_2) + \Upsilon(1)(t_1)\psi (t_1) +  \Upsilon(2)(t_2) \\
	&=&  \psi(t_2) - P_- (\psi(t_1))\psi (t_1) -P_- (\psi^+_1(t_2)) + \frac{1}{2}P_- (\psi(t_1))P_- (\psi(t_1))\\
	&=&  \psi(t_2) - P_- (\psi(t_1))\psi (t_1) -P_-\bigl(\psi(t_2) - P_- (\psi(t_1))\psi (t_1) \bigr) \\
         &   & \quad\ -P_-\bigl(\frac{1}{2}P_- (\psi(t_1))P_- (\psi(t_1))\bigr)+ \frac{1}{2}P_- (\psi(t_1))P_- (\psi(t_1))\\
	&=& P_+\bigl( \psi(t_2) - P_- (\psi(t_1))\psi (t_1) \bigr) + \frac{1}{2}P_+\bigl(P_- (\psi(t_1))P_- (\psi(t_1))\bigr).
\end{eqnarray*}}
We first calculate the counterterm:
\allowdisplaybreaks{
\begin{eqnarray*}		
 \Upsilon(2)(t_2) &=& \exp^*(-P_- \circ \psi^+_1 \circ \pi_2) \ast \exp^*(-P_- \circ \psi \circ \pi_1)(t_2)\\ 
 			  &=& -P_- \circ \psi^+_1(t_2) +  \frac{1}{2}P_- (\psi(t_1))P_- (\psi(t_1))\\
			  &=&  -P_- ( \Upsilon^-_1*\psi(t_2)) 
			  		+ \frac{f^2}{2\varepsilon^2}
\end{eqnarray*}} 
Now observe that:
\allowdisplaybreaks{
\begin{eqnarray*}	
 \Upsilon^-_1*\psi(t_2) &=& \psi(t_2) - P_- (\psi(t_1))\psi (t_1) + \frac{1}{2}P_- (\psi(t_1))P_- (\psi(t_1))\\
 	&=& \frac{1}{2\varepsilon^2} 
									-  \frac{1}{\varepsilon} a
									+  a^2
								   	- \frac{2}{3}\varepsilon a^3              
									+ O(\varepsilon^2)   \\
					& & \qquad -\bigl(\frac{1}{\varepsilon} + \frac{1}{2} \varepsilon b^2 \bigr) 
			      		\bigl( \frac{1}{\varepsilon} 
									- a 
									+ \frac{1}{2}\varepsilon a^2
								   	- \frac{1}{3!}\varepsilon^2 a^3  + O(\varepsilon^3) \bigr) 
									+ \frac{1}{2}\bigl(\frac{1}{\varepsilon} + \frac{1}{2} \varepsilon b^2 \bigr) 
							      \bigl(\frac{1}{\varepsilon} + \frac{1}{2} \varepsilon b^2 \bigr)\\ 
		&=&  \frac{1}{2} a^2 - \frac{1}{2}\varepsilon\bigl(a^3 -  ab^2\bigr) + O(\varepsilon^2).
\end{eqnarray*}} 
We get:
$$
	-P_- ( \psi(t_2) - P_- (\psi(t_1))\psi (t_1) +  \Upsilon^-_1(t_2)) =  \Upsilon^-_2(t_2)=0.
$$
Hence, we find:
\allowdisplaybreaks{
\begin{eqnarray*}			 
		\Upsilon(2)(t_2) &=& \Upsilon^-_2\ast\Upsilon^-_1(t_2)=\Upsilon^-_1(t_2)
					   =  \frac{1}{2\varepsilon^2} f^2,
\end{eqnarray*}}
which is local, and:
\allowdisplaybreaks{
\begin{eqnarray*}
	\psi^+_2(t_2) =  \psi^+_1(t_2)
			   &=&  \frac{1}{2} a^2 - \frac{1}{2}\varepsilon\bigl(a^3 -  ab^2\bigr) + O(\varepsilon^2) \\
			   &=& \frac{1}{2}\bigl(  - a + \frac{1}{2}\varepsilon (a^2 - b^2)  + O(\varepsilon^2) \bigr)^2.
\end{eqnarray*}}

At third order, using $\Upsilon^-_2(t_2)= P_-(\psi^+_1(t_2))=0$ and $\Upsilon^-_2(t_1)=0$, a direct computation shows that similarly
the order 3 counterfactor, $\Upsilon^-_3$, evaluated on the order 3 tree, $t_3$, is zero:
$$
	\Upsilon^-_3(t_3)  = 	 \exp^*(-P_-\circ \psi^2_+ \circ \pi_3) (t_3) 
						= -P_-\bigl(\psi^2_+ (t_3) \bigr)
						=0
$$
whereas
\allowdisplaybreaks{
\begin{eqnarray*}
	\psi^+_3(t_3) &=& \psi^+_2(t_3) =\frac{1}{3!}\bigl(  - a + \frac{1}{2}\varepsilon (a^2 - b^2)  + O(\varepsilon^2) \bigr)^3,	
 \end{eqnarray*}}
 and the counterterm at order 3 is:
\allowdisplaybreaks{
\begin{eqnarray*}
	 \Upsilon(3)(t_3) &=&   -\frac{1}{3!\varepsilon^3}f^3.
 \end{eqnarray*}}
 This pattern is general and encoded
in the following proposition.
 
 \begin{prop} 
The renormalization of the toy-model (\ref{example}) via the exponential method, in the context  of DR together with the general non-RB scheme (\ref{Taylor}) gives the $n$th-order counterfactor $\Upsilon_n^-(t_n) = 0$ and counterterm:
$$
 	\Upsilon(n)(t_n) =\frac{1}{n!\varepsilon^n}(-f)^n,
$$
with: 
$$
	f=f(\varepsilon;q):=1 
									+ \frac{1}{2}\varepsilon^2 b^2 
								   	 - \frac{1}{3!}\varepsilon^3 b^3 
									+ \cdots 
									+ \frac{(-1)^{m+1}}{(m+1)!}\varepsilon^{m+1} b^{m+1}
$$
corresponding the Taylor jet operator (\ref{Taylor}), $\delta^m_{\varepsilon,q}$, say, of fixed order $m \in \mathbb{N}_+$. The $n$th-regular, i.e. renormalized character is given by:
$$
	\psi^+_n(t_n)= \frac{1}{n!}\bigl(  - a 
							+ \frac{1}{2}\varepsilon (a^2 - b^2) 
						   	- \frac{1}{3!}\varepsilon^2 (a^3 - b^3)
							+ \cdots               
							+ \frac{(-1)^m}{(m+1)!}\varepsilon^{m} (a^{m+1} - b^{m+1}) - O(\varepsilon^{m+1})\bigr)^n
$$
 \end{prop}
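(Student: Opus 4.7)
The proof plan rests on two structural observations. First, the ladder Hopf algebra is divided-power-like: $\Delta(t_n)=\sum_{k=0}^n t_{n-k}\otimes t_k$, equivalently $\Delta(\alpha)=\alpha\otimes\alpha$ for $\alpha=\sum_n t_n$. A direct computation with the nested integral in~(\ref{example}) gives $\psi(t_n)=\psi(t_1)^n/n!$, so $\psi$ is ``exponential-type''. The same identity $\varphi(t_n)=\varphi(t_1)^n/n!$ holds for any convolution exponential $\varphi=\exp^*(c\,\pi_1)$ of a scalar multiple of $\pi_1$, because in $\Delta^{k-1}(t_n)=\sum_{j_1+\cdots+j_k=n}t_{j_1}\otimes\cdots\otimes t_{j_k}$ only the composition $(1,\ldots,1)$ survives when all the $t_{j_l}$ are required to lie in $H_1$. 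Applied to the first counterfactor this yields $\Upsilon_1^-(t_n)=(-f/\varepsilon)^n/n!$, and a one-line binomial computation produces
\[
\psi_1^+(t_n)=(\Upsilon_1^-*\psi)(t_n)=\frac{1}{n!}\bigl(\psi(t_1)-f/\varepsilon\bigr)^n=\frac{g^n}{n!},
\]
where $g:=P_+\psi(t_1)$ is precisely the base of the power appearing in the statement.

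The entire proof then reduces to showing that the recursion stabilises after step one, i.e.\ that $\psi_1^+$ is already regular to all orders. This is the main (and essentially the only) obstacle, and it is where the precise shape of $P_-$ enters. Write $g=\sum_{i\ge 0}g_i\varepsilon^i$; by construction of $P_+$ one has $g_0=-a$, $g_i\big|_{a=b}=0$ for $1\le i\le m$, while the $g_i$ for $i>m$ are unconstrained. The pole part of $g^n$ vanishes trivially; for the Taylor-jet part $\delta^m_{\varepsilon,q}(g^n)$, the coefficient of $\varepsilon^i$ with $1\le i\le m$ is a sum of products $g_{j_1}\cdots g_{j_n}$ with $\sum_l j_l=i\le m$, so at least one index $j_l$ lies in $\{1,\ldots,m\}$, forcing the corresponding factor and hence the whole product to vanish at $a=b$. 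Thus $P_-(g^n)=0$ for every $n\ge 1$, so $\psi_1^+$ is fully regular.

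Once this stability is established, every higher infinitesimal character $-P_-\circ\psi_n^+\circ\pi_{n+1}$ vanishes, hence $\Upsilon_{n+1}^-=e$ for all $n\ge 1$; consequently $\Upsilon_n^-(t_n)=\epsilon(t_n)=0$ for $n\ge 2$, $\Upsilon(n)=\Upsilon_1^-$, and $\psi_n^+=\psi_1^+$. Substituting the explicit formulas for $\Upsilon_1^-(t_n)$ and $g^n/n!$ derived in the first paragraph yields the three equalities asserted in the proposition.
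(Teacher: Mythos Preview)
Your argument is correct and follows essentially the same route as the paper: both show that the exponential recursion stabilises after the first step because $\psi_1^+$ is already regular, and then read off the formulas from the divided-power/group-like structure of the ladder Hopf algebra. The paper packages this via the group-like element $T=\sum_n t_n$ and the factorisation $\psi=\exp^\ast(\eta_-)\ast\exp^\ast(\eta_+)$ in the bicommutative convolution algebra, whereas you work directly with the identity $\varphi(t_n)=\varphi(t_1)^n/n!$; these are two presentations of the same computation. One point worth noting: the paper dismisses the regularity of $\exp^\ast(\eta_+)$ with ``by direct inspection'', while you actually supply the argument (each $\varepsilon^i$-coefficient of $g^n$ with $1\le i\le m$ contains a factor $g_{j_l}$ with $1\le j_l\le m$, hence vanishes at $a=b$), so your write-up is in fact more complete on the only nontrivial step.
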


\begin{proof}
Let us write $T:=\un + \sum\limits_{n=1}^\infty t_n$ for the formal sum of all rooted ladder trees. This sum is a group-like element ($\Delta(T)=T\otimes T$). It follows that 
$$
	\psi(T)=\sum_n\frac{1}{n!}(\frac{1}{\varepsilon}(\frac{p}{\mu})^{-\varepsilon})^n
		  =\exp(\frac{1}{\varepsilon}(\frac{p}{\mu})^{-\varepsilon})
$$
can be rewritten as the convolution exponential of the infinitesimal character $\eta$:
$$
	\eta (t_n):= \begin{cases}
				\eta(t_1):=\frac{1}{\varepsilon}(\frac{p}{\mu})^{-\varepsilon}, & n = 1\\
				0	& else.
			\end{cases}
$$
Then: 
$$
	\psi(T) =\exp^\ast(\eta)(T).
$$
Let us write $\eta^-:=P_-(\eta)$ and $\eta^+:=P_+(\eta)$, so that, in particular $\eta^-(t_1)=-\Upsilon(1)(t_1)=\frac{f}{\varepsilon}$ and $\eta^+(t_1)= - a 
							+ \frac{1}{2}\varepsilon (a^2 - b^2) 
						   	- \frac{1}{3!}\varepsilon^2 (a^3 - b^3)
							+ \cdots               
							+ \frac{(-1)^m}{(m+1)!}\varepsilon^{m} (a^{m+1} - b^{m+1}) - O(\varepsilon^{m+1})$.

We get finally (recall that the convolution product of linear endomorphisms of a bicommutative Hopf algebra is commutative):
$$
	\psi=\exp^\ast(\eta)=\exp^\ast(\eta_-)\ast\exp^\ast(\eta_+),
$$
where $\Upsilon(1)^{-1}=\exp^\ast(\eta_-)$ and where (by direct inspection) $\exp^\ast(\eta_+)$ is regular. It follows that $\psi$ is renormalized already at the first order of the exponential algorithm, that is: $\Upsilon_\infty=\exp^\ast(-\eta_-)=\Upsilon(1)$ and $\psi^+=\exp^\ast(\eta_+)$. The Proposition follows from the group-like structure of $T$ which implies that:
$$
	\Upsilon_\infty(t_n)=\exp^\ast(-\eta_-)(t_n)=\frac{1}{n!}(-\eta_-(t_1))^n,
$$
and similarly for $\psi^+(t_n)$.
\end{proof}

Notice that in the classical MS scheme, one gets simply $f=1$ in the above formulas. One recovers then by the same arguments the well-known result following from the BPHZ method in DR and MS.	

\vspace{0.5cm}
\subsection*{Acknowledgments}

The first named author is supported by a de la Cierva grant from the Spanish government. We thank warmly J.~Gracia-Bond\'\i a. Long joint discussions on QFT in Nice and Zaragoza were seminal to the present work, which is part of a common long-term project.

\end{document}